\newcommand{\fr}[1]{{\mathbb{F}}\left[{#1}\right]}
\setlist{leftmargin=5pt}
\renewcommand{\headrulewidth}{0pt}
\DeclareMathOperator{\Gaussian}{Gaussian}
\providecommand{\leftsquigarrow}{%
  \mathrel{\mathpalette\reflect@squig\relax}%
}
\newcommand{\reflect@squig}[2]{%
  \reflectbox{$\m@th#1\rightsquigarrow$}%
}
\newcommand{\newalgname}[1]{%
  \renewcommand{\ALG@name}{#1}%
}
\newcommand {\C} {{\rm I\kern-5.5pt C}}
\newcommand{\bp}[1]{{\mathbb{P}}\left[{#1}\right]}
\def\centerhack#1{\hbox to 0pt{\hss\footnotesize #1\hss}}
\def\centerhackn#1{\hbox to 0pt{\hss #1\hss}}
\def\dchack#1{\vbox to 0pt{\vss{\hbox to 0pt{\hss#1\hss}}\vss}}
\newcommand{\pr}[1]{{\mathbb{P}}\left[{#1}\right]} 
\newcounter{subeqn} %
\newcounter{mysub}
\newtheorem{defn}{Definition}
\newtheorem{lem}{Lemma}
\newtheorem{thm}{Theorem}
\newtheorem{rem}{Remark}
\newtheorem{cor}{Corollary}
\newtheorem*{proposition1.1}{Proposition 1.1}
\newtheorem*{proposition1.2}{Proposition 1.2}
\newtheorem*{proposition1.3}{Proposition 1.3}
\newtheorem*{proposition2.1}{Proposition 2.1}
\newtheorem*{proposition2.2}{Proposition 2.2}
\begin{document}

\title{A Blockchain-Based Approach for Saving and Tracking Differential-Privacy Cost}






\author{Yang Zhao,~\IEEEmembership{Student Member,~IEEE,}
        Jun Zhao,~\IEEEmembership{Member,~IEEE,}
		Jiawen Kang,~\IEEEmembership{Member,~IEEE,}
        Zehang~Zhang,~\IEEEmembership{Student Member,~IEEE,}
        Dusit Niyato,~\IEEEmembership{Fellow,~IEEE,}
        Shuyu Shi,~\IEEEmembership{Member,~IEEE,}
        and Kwok-Yan~Lam,~\IEEEmembership{Senior Member,~IEEE}
\thanks{
Yang Zhao, Jun Zhao, Jiawen Kang, Zehang Zhang, Dusit Niyato and Kwok-Yan Lam are with School of Computer Science and Engineering, Nanyang Technological University, Singapore,  639798. (Emails: S180049@e.ntu.edu.sg, junzhao@ntu.edu.sg, kavinkang@ntu.edu.sg, prestonzzh@163.com, dniyato@ntu.edu.sg, kwokyan.lam@ntu.edu.sg).}
\thanks{Shuyu Shi is with Department of Computer Science and Engineering, Nanjing University, Nanjing, China, 210008. (Email: 
ssy@nju.edu.cn).
} 
}


\maketitle

\thispagestyle{fancy}
\pagestyle{fancy}
\lhead{This paper appears in IEEE Internet of Things Journal (IoT-J). Please feel free to contact us for questions or remarks.}
\cfoot{\thepage}
\renewcommand{\headrulewidth}{0.4pt}
\renewcommand{\footrulewidth}{0pt}

\begin{abstract}
An increasing amount of users' sensitive information is now being collected for analytics purposes. Differential privacy has been widely studied in the literature to protect the privacy of users' information. The privacy parameter bounds the information about the dataset leaked by the noisy output. Oftentimes, a dataset needs to be used for answering multiple queries, so the level of privacy protection may degrade as more queries are answered. Thus, it is crucial to keep track of privacy budget spending, which should not exceed the given limit of privacy budget. Moreover, if a query has been answered before and is asked again on the same dataset, we may reuse the previous noisy response for the current query to save the privacy cost. In view of the above, we design an algorithm to reuse previous noisy responses if the same query is asked repeatedly. In particular, considering that different requests of the same query may have different privacy requirements, our algorithm can set the optimal reuse fraction of the old noisy response and add new noise to minimize the accumulated privacy cost. Furthermore, we design and implement a blockchain-based system for tracking and saving differential-privacy cost. As a result, the owner of the dataset will have full knowledge about how the dataset has been used and be confident that no new privacy cost will be incurred for answering queries once the specified privacy budget is exhausted.
\end{abstract}

\begin{IEEEkeywords}
Blockchain, differential privacy, data analytics, Gaussian mechanism.
\end{IEEEkeywords}

\section{Introduction} \label{sec-intro}

Massive volumes of users' sensitive information are being collected for data analytics and machine learning such as large-scale Internet of Things (IoT) data. Some IoT data contain users' confidential information, for example, energy consumption or location data. They may expose a family's habits~\cite{tudor2020bes,hassan2019privacy,xiong2018enhancing,liu2018epic,gai2019differential}. To protect personal privacy, many countries have strict policies about how technology companies collect and process users' data. However, the companies need to analyze users' data for service quality improvement. To preserve privacy while revealing useful information about datasets, 
differential privacy (DP) has been proposed~\cite{dwork2006our,dwork2006calibrating,dwork2014algorithmic}. Intuitively, by incorporating some noise, the output of an  algorithm under DP will not change significantly due to the presence or absence of one user's information in the dataset. Due to its introduction~\cite{dwork2006our,dwork2006calibrating},
DP has attracted much interest from both academia~{\cite{dwork2016concentrated,bun2016concentrated,dwork2015generalization,mcsherry2007mechanism,abadi2016deep}} and industry~\cite{tang2017privacy,erlingsson2014rappor,ding2017collecting}. For example, Apple has incorporated DP into its   mobile operating system iOS~\cite{tang2017privacy};  Google has implemented a DP tool called
RAPPOR in the Chrome browser to collect information~\cite{erlingsson2014rappor}.

Roughly speaking, a randomized mechanism achieving \mbox{$(\epsilon,\delta)$-DP}~\cite{dwork2006our} means that except with a (typically small)  probability $\delta$, altering a record in a database \mbox{cannot}  change the probability that an output is seen by more than a multiplicative factor $e^{\epsilon}$. 
Thus, the information about the dataset leaked by the noisy output of an \mbox{$(\epsilon,\delta)$-DP} algorithm is bounded by the privacy parameters $\epsilon$ and $\delta$. Smaller $\epsilon$ and $\delta$ mean stronger privacy protection and less information leakage. Note that non-zero information leakage is necessary to achieve non-zero utility. Usually, a dataset may be used for answering multiple queries (e.g., for multiple analytics tasks), thus accumulating  the information leakage and degrading  the privacy protection  level, which can be intuitively understood as the increase of privacy spending. Therefore, it is necessary to record the privacy cost to  prevent it from exceeding the privacy budget. The privacy budget is used to quantify the privacy risk when a differential private scheme is applied to real-world applications. Besides, we reduce privacy cost by reusing old noisy response to answer the current query if the query was answered before.

Traditionally, the privacy cost incurred by answering queries on a dataset is claimed by the dataset holder. Users whose information is in the dataset are not clear about the usage. It is possible that privacy consumption has exceeded the privacy budget. To solve this problem, the emerging  blockchain technology provides a new solution to manage the privacy cost. Blockchain is a chain of blocks storing  cryptographic and tamper-resistant transaction records without using a centralized server~\cite{henry2018blockchain,8832210}. With blockchain recording how the dataset is used for answering queries, users have full knowledge of how their information is analyzed. Users can easily access the blockchain to check the consumption of the privacy budget. The dataset holder has the motivation to adopt our blockchain-based approach to provide the following accountability guarantee to users whose information is in the dataset: if the dataset holder uses the dataset more than the set of queries recorded by the blockchain, measures can be taken to catch the dataset holder with cheating because transactions written into the blockchain are tamper-resistant. Yang~\emph{et~al.}~\cite{yang2017differentially} propose to leverage blockchain to track differential privacy budget, but they do not propose a mechanism to reuse noise. In contrast, we design a DP mechanism to effectively reuse previous queries' results to reuse noise and reduce privacy cost. In Section~\ref{section-yang-drawbacks}, we present more detailed comparisons.


In view of the above, we propose a blockchain-based Algorithm~\ref{algmain} and implement it to track and manage differential-privacy cost, which uses blockchain to make the privacy spending transparent to the data owner. Consequently, the data owner can track how dataset used by checking blockchain transactions' information, including each query's type, the noisy response used to answer each query, the associated noise level added to the true query result, and the remaining privacy budget. 
In addition to providing transparency of privacy management, another advantage of our blockchain-based system is as follows. Once the specified privacy budget is exhausted,  a smart contract implemented on the blockchain ensures that no new privacy cost will be incurred, and this can be verified. Furthermore, since the blockchain stores the noisy response used to answer each query, we also design an algorithm to minimize the accumulated privacy cost by reusing previous noisy response if the same query is asked again. Our algorithm (via a rigorous proof) is able to set the optimal reuse fraction of the old noisy response and add new noise (if necessary) considering different requests of the same query may be sent with different privacy requirements. 
In our blockchain-based system, 
reusing noisy responses not~only saves privacy cost, but also reduces communication overhead  when the noisy response is generated without contacting the server hosting the dataset. 




\textbf{Contributions.} The major contributions of this paper are summarized as follows:
    \begin{itemize}
      \item First, a novel privacy-preserving algorithm with a rigorous mathematical proof is designed to minimize accumulated privacy cost under a limited privacy budget by reusing previous noisy responses if the same query is received. Thus, a dataset can be used to answer more queries while preventing the privacy leakage, which is essential for the datasets with frequent queries, e.g., medical record datasets.
      \item  Second, our designed approach reduces the number of times to request the server significantly by taking advantage of recorded noisy results.
      \item Third, we implement the proposed system and algorithm according to  a detailed sequence diagram, and conduct experiments by using a real-world dataset. Numerical results demonstrate that our proposed system and algorithm are effective in saving the privacy cost while keeping accuracy.
    \end{itemize}

\textbf{Organization.}~The rest of the paper is organized as follows. Section \ref{sec-preliminaries} introduces preliminaries about differential privacy and blockchains. Section \ref{sec-system} presents system design including our proposed noise reuse algorithm. Section \ref{sec-Challenges} describes challenges in implementing our system. In Section \ref{sec-Experiments}, we discuss experimental results to validate the effectiveness of our system. Section \ref{sec-related-work} surveys related work. Section \ref{sec-conclusion} concludes this paper and identifies future directions.
 
\textbf{Notation.} Throughout the paper, $\pr{\cdot}$ denotes the probability, and $\fr{\cdot}$ stands for the probability density function. The notation $\mathcal{N}(0, A)$ denotes a Gaussian random variable with zero mean  and variance $A$, and means a fresh Gaussian noise when it is used to generate a noisy query response.  Notations used in the rest of the paper are summarized in Table~\ref{table:notation}.

\begin{table}[!h]
\caption{Summary of notations}
\centering
\begin{tabular}{|l|l|}
\hline
$(\epsilon, \delta)$              & privacy parameters                                                                                                             \\ \hline
$\pr{\cdot}$            & probability                                                                                                                \\ \hline
$\fr{\cdot}$            & probability  density  function                                                                                             \\ \hline
$D$                     & dataset                                                                                                                    \\ \hline
$D'$                    & neighbouring dataset of $D$                                                                                                      \\ \hline
$\Delta_{Q}$            & $\ell_2$-sensitivity of query $Q$                                                                                                      \\ \hline
$\sigma$                & standard deviation of the Gaussian noise                                                                                                         \\ \hline
$\widetilde{Q}_m(D)$    & \begin{tabular}[c]{@{}l@{}}noisy query response for query $Q_m$\\  on dataset $D$\end{tabular}                             \\ \hline
$Y_1, Y_2, \ldots, Y_m$ & randomized mechanisms                                                                                                      \\ \hline
$r_{\textup{optimal}}$  & the optimal fraction                                                                                                       \\ \hline
$  L_{Y}(D, D';y) $     & privacy loss                                                                                                               \\ \hline
$\mathcal{N}(0, A)$     & \begin{tabular}[c]{@{}l@{}}a Gaussian random variable with zero \\ mean  and variance $A$\end{tabular}                     \\ \hline
$V$                     & variance                                                                                                                   \\ \hline
\end{tabular}
\label{table:notation}
\end{table}

\section{Preliminaries}\label{sec-preliminaries}

We organize this section on preliminaries as follows. In Section~\ref{sec-Preliminaries-DP}, we introduce the formal definition of differential privacy. In Section~\ref{definition-blockchain}, we explain the concepts of blockchain, Ethereum and smart contract.

\subsection{Differential Privacy} \label{sec-Preliminaries-DP}

Differential privacy intuitively means that the adversary cannot determine with high confidence whether the randomized output comes from a dataset $D$ or its neighboring dataset $D'$ which differs from $D$ by one record. The formal definition of $(\epsilon,\delta)$-differential privacy is given in Definition~\ref{def-DP}, and the notion of neighboring datasets is discussed in Remark~\ref{rem-neighboring-datasets}.


 
\begin{defn}[$(\epsilon, \delta)$-Differential privacy~\cite{Dwork2014}]  \label{def-DP}
A randomized mechanism $Y$, which generates a randomized output given a dataset as the input, achieves {\textit{$(\epsilon, \delta)$-differential privacy}} if
\begin{align} 
& \pr{Y(D) \in \mathcal{Y}} \leq e^{\epsilon} \pr{Y(D')\in \mathcal{Y}} + \delta, \label{def-DP-eq}
  \\ & \textup{for $D$ and $D'$ iterating through} \nonumber \\ & \textup{all pairs of neighboring datasets, and} \nonumber  \\ & \textup{for  $\mathcal{Y}$ iterating through all subsets of the output range}, \nonumber
\end{align}
where $\pr{\cdot}$ denotes the probability, and the probability space is over the coin flips of the randomized mechanism $Y$.
\end{defn}

  \begin{rem}
  \label{rem-pureDP}
The notion of $(\epsilon,\delta)$-differential privacy under \mbox{$ \delta = 0$} becomes~{\textit{$\epsilon$-differential privacy}}. $\epsilon$-Differential privacy and $(\epsilon, \delta)$-differential privacy are also referred to as \emph{pure} and \emph{approximate} differential privacy, respectively, in many studies~\mbox{\cite{dwork2016concentrated,bun2016concentrated,dwork2015generalization}}.
 \end{rem}

\begin{rem}[\textbf{Notion of neighboring datasets}] \label{rem-neighboring-datasets}
 
Two datasets $D$ and $D'$ are called neighboring if they differ only in one tuple. There are still variants about this. In the first case, the sizes of $D$ and $D'$ differ by one so that $D'$ is obtained by adding one record to $D$ or deleting one record from $D$. In the second case, $D$ and $D'$ have the same size (say $n$), and have different records at only one of the $n$ positions. Finally, the notion of neighboring datasets can also be defined to include both the cases above. Our results in this paper
apply to all of the above cases.
 \end{rem}

 
Among various mechanisms to achieve DP, the \textit{Gaussian mechanism} for real-valued queries proposed in~\cite{dwork2006our} has received much attention. The improved result given by \cite{Dwork2014} is Lemma~\ref{lemma-Gaussian}.


 \begin{lem}[Theorem A.1 by Dwork and Roth~\cite{Dwork2014}]  \label{lemma-Gaussian}
To answer a query $Q$ with $\ell_2$-sensitivity $\Delta_{Q}$, adding a zero-mean Gaussian noise with standard deviation $\sqrt{2\ln\frac{ 1.25}{\delta}}\times\frac{\Delta_{Q}}{\epsilon}$ (denoted by $\Gaussian(\Delta_{Q}, \epsilon, \delta)$ hereafter in this paper) to each dimension of the true query result achieves $(\epsilon, \delta)$-differential privacy. The above $\ell_2$-sensitivity $\Delta_{Q}$ of a query $Q$ is defined as the maximal $\ell_2$ distance between the true query results for any two neighboring datasets $D$ and $D'$ that differ in one record; i.e., \mbox{$\Delta_{Q}  = \max_{\textrm{neighboring $D,D'$}} \|Q(D) - Q(D')\|_{2}$}. 
 \end{lem}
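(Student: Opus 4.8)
The plan is to control the \emph{privacy loss random variable} and thereby reduce $(\epsilon,\delta)$-differential privacy to a one-sided Gaussian tail bound. Fix a pair of neighboring datasets $D,D'$ and set $v=Q(D)-Q(D')$, so $\|v\|_2\le\Delta_{Q}$ by the definition of $\ell_2$-sensitivity. The mechanism outputs $Y(D)=Q(D)+Z$ with $Z\sim\mathcal{N}(0,\sigma^2 I)$ and $\sigma=\sqrt{2\ln\frac{1.25}{\delta}}\cdot\frac{\Delta_{Q}}{\epsilon}$. Writing an output as $y=Q(D)+z$ and forming the ratio of the two shifted Gaussian densities, a direct computation gives the privacy loss
\begin{align}
L(y)=\ln\frac{\fr{Y(D)=y}}{\fr{Y(D')=y}}=\frac{2\langle z,v\rangle+\|v\|_2^2}{2\sigma^2}. \nonumber
\end{align}
Since $\langle Z,v\rangle\sim\mathcal{N}(0,\sigma^2\|v\|_2^2)$, under the law of $Y(D)$ the quantity $L$ is itself Gaussian with mean $\|v\|_2^2/(2\sigma^2)$ and variance $\|v\|_2^2/\sigma^2$; equivalently, projecting $Z$ onto the direction of $v$ collapses the $d$-dimensional problem to a scalar one.

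Next I would record the elementary reduction: to verify~\eqref{def-DP-eq} it suffices to show $\pr{L>\epsilon}\le\delta$ with the probability taken under $y\sim Y(D)$. Indeed, for any measurable $\mathcal{Y}$, using $\fr{Y(D)=y}=e^{L(y)}\fr{Y(D')=y}$ and splitting the integral over $\mathcal{Y}$ according to whether $L(y)\le\epsilon$,
\begin{align}
\pr{Y(D)\in\mathcal{Y}}
&= \int_{\mathcal{Y}} e^{L(y)}\,\fr{Y(D')=y}\dr y \nonumber\\
&\le e^{\epsilon}\,\pr{Y(D')\in\mathcal{Y}}+\pr{L>\epsilon}. \nonumber
\end{align}
Only the upper tail of $L$ needs to be controlled, since the reverse neighboring pair is handled by the same computation using the symmetry of the centered Gaussian.

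The core step is to bound $\pr{L>\epsilon}$. From the displayed formula, $\{L>\epsilon\}$ equals $\{\langle z,v\rangle>\sigma^2\epsilon-\tfrac12\|v\|_2^2\}$, and since $\langle Z,v\rangle/\|v\|_2\sim\mathcal{N}(0,\sigma^2)$ this has probability $\pr{\mathcal{N}(0,1)>\frac{\sigma\epsilon}{\|v\|_2}-\frac{\|v\|_2}{2\sigma}}$. The map $t\mapsto\frac{\sigma\epsilon}{t}-\frac{t}{2\sigma}$ is decreasing on $t>0$, so the worst case over $\|v\|_2\le\Delta_{Q}$ is $\|v\|_2=\Delta_{Q}$, where the threshold equals $c-\frac{\epsilon}{2c}$ with $c=\sqrt{2\ln\frac{1.25}{\delta}}$. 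Now apply the standard estimate $\pr{\mathcal{N}(0,1)>s}\le\frac{1}{s\sqrt{2\pi}}e^{-s^2/2}$ with $s=c-\frac{\epsilon}{2c}$: using $\epsilon<1$ we get $s^2\ge c^2-\epsilon\ge c^2-1$, hence $e^{-s^2/2}\le e^{1/2}e^{-c^2/2}=\frac{e^{1/2}}{1.25}\,\delta$, and a short numerical check that $\frac{1}{s\sqrt{2\pi}}\le\frac{1.25}{e^{1/2}}$ over the admissible range of $\delta$ then gives $\pr{L>\epsilon}\le\delta$, which completes the proof.

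I expect the final numerical step to be the main obstacle: the constant $1.25$ and the hypothesis $\epsilon<1$ are precisely what force $\frac{1}{s\sqrt{2\pi}}e^{-s^2/2}\le\delta$, and one must separately verify $s>0$ and that $s$ is large enough for the bound on $\frac{1}{s\sqrt{2\pi}}$ to hold --- this is where a lower bound on $c$, hence an implicit upper bound on $\delta$, quietly enters. Everything else --- the density-ratio computation, the collapse to a scalar Gaussian, and the one-sided reduction of $(\epsilon,\delta)$-DP --- is routine. Since the lemma is quoted verbatim from~\cite{Dwork2014}, one could also simply invoke its Appendix~A; the sketch above is the self-contained version.
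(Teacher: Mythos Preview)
The paper does not prove this lemma at all; it is stated as a direct citation of Theorem~A.1 of Dwork and Roth~\cite{Dwork2014} and used as a black box throughout. Your sketch is precisely the standard Dwork--Roth argument from that appendix --- compute the privacy-loss random variable, reduce to a one-sided Gaussian tail, and push the numerics through with the Mills-ratio bound --- so there is nothing to compare beyond noting that you have correctly reconstructed the cited proof. Your closing caveat is well placed: the numerical step genuinely needs $\epsilon\in(0,1)$ and $\delta$ small enough that $c=\sqrt{2\ln(1.25/\delta)}$ makes $s=c-\epsilon/(2c)$ sufficiently large, hypotheses that the paper's statement of the lemma (like many restatements in the literature) silently omits.
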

 
More discussions on the $\ell_2$-sensitivity of a query are given in Section~\ref{subsec-ell2-sensitivity}. Section~\ref{sec:dp-param-setting} discusses the setting of privacy parameters $\epsilon$ and $\delta$.

\subsection{Blockchain, Ethereum and Smart Contracts}~\label{definition-blockchain}
\textbf{Blockchain.} The blockchain technology is popularly used in systems requiring high security and transparency, such as Bitcoin and  Ethereum~\cite{bitcoin}. The blockchain can be effectively used to solve the double-spending problem in Bitcoin transaction by using a peer-to-peer network. The solution is to hash transaction information in a chain of hash-based Proof-of-Work (PoW, used by Bitcoin) which is the consensus mechanism algorithm used to confirm transactions and produce new blocks to the chain. Once the record is formed, it cannot be changed except redoing Proof-of-Work.

Besides, the blockchain is constantly growing with appending `completed' blocks. Blocks consisting of the most recent transactions are added to the chain in chronological order~\cite{blockchain}. Each blockchain node can have a copy of the blockchain. The blockchain allows participants to track their transactions without centralized control.



\textbf{Ethereum.} Ethereum is a blockchain platform which allows users to create decentralized end-to-end applications~\cite{wood2014Ethereum}. The miners in Ethereum use Proof-of-Work consensus algorithm  to complete transaction verification and synchronization. Besides, Ethereum can run smart contracts elaborated below.

\textbf{Smart Contract.} The smart contract was first proposed by Nick Szabo as a computerized transaction protocol that can execute terms of a contract automatically~\cite{smartcontract}. It intends to make a contract digitally, and allows to maintain credible transactions without a third party. With the development of blockchains, such as Ethereum, smart contracts are stored in the blockchain as scripts. A blockchain with a Turing-complete programming language allows everyone to customize smart contract scripts for transactions~\cite{buterin2014next}. Smart contracts are triggered when transactions are created or generated on the blockchain to finishe specific tasks or services.


\section{System Description}\label{sec-system}
Our blockchain-based system provides differentially private responses to queries while minimizing the privacy cost via noise reuse. We design a web application to implement our  Algorithm~\ref{algmain}, which generates noisy responses to queries with the minimal privacy cost by setting the optimal reuse fraction of the old noisy response and adding new noise (if necessary). For clarity, we defer Algorithm~\ref{algmain} 
and its discussion to Section~\ref{sec-system}. 
The design of the system is illustrated in Fig.~\ref{fig:system} and we discuss the details in the following. In Section~\ref{sec-Experiments}, we will discuss the implementation and experiments of our blockchain-based system, and present more figures about the implementation. In particular, Fig.~\ref{fig:screen} there shows the screenshot of our blockchain-based privacy management system~\cite{han2020blockchain}, while Fig.~\ref{fig:output} presents outputs while using the system.


\begin{figure}[!t]
\centering
\includegraphics[scale=0.4]{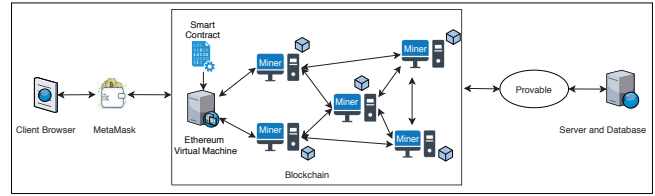}
\caption{The proposed blockchain-based system architecture for differential-privacy  cost management.}
\label{fig:system}
\end{figure}

\subsection{System Architecture}

Our system includes  the client, the blockchain, the server, and smart contract followed by more  details as  below.

\textit{Client:} The primary function of the client is to transfer users' queries to the blockchain smart contract. 
The client computes the required parameter standard deviation for the server to generate the Gaussian noise using the privacy parameters~$\epsilon$ and~$\delta$ and forwards the query to the blockchain. Also, the client can display the query result to the analyst after getting the noisy response to the query.

\textit{Blockchain Smart Contract:} The blockchain serves as a middleware between the client and the server. It decides which query should be submitted to the server. The blockchain records the remaining privacy budget, query type, the noisy response to answer the query,   the  privacy parameters, and the amount of corresponding noise. If the remaining privacy budget is enough, the smart contract will execute the query match function with the recorded   history. Otherwise, the smart contract will reject this query. If the current query does not match with any query in the history, the smart contract will call the server to calculate the result. If the query has been received before, the blockchain smart contract will not call the server if the noisy response can be completely generated by old noisy answers and will call the server if access to the dataset is still needed to generate the noisy response.

\textit{Server:} The data provider hosts the server. The server provides APIs to answer analysts' queries. When the API is called, the server will query the dataset to calculate the respective answer. After the true value $Q(D)$ is calculated, the server will add noise to perturb the answer. Then the server returns the noisy answer to the blockchain.

In the rest of the paper, we use \texttt{Blockchain}, \texttt{Client}, and \texttt{Server} to denote the blockchain, client, and server, respectively.

\subsection{System Functionality}


\textit{Match query with query history and generate noisy response:} \texttt{Blockchain} compares the current query type with saved query types to retrieve previous query results. If it is the first time for \texttt{Blockchain} to see the query, \texttt{Blockchain} will forward the query to the server, and \texttt{Server} will return the perturbed result which satisfies differential privacy to \texttt{Blockchain}. If the current query type matches previous answers' query type, \texttt{Blockchain} will compare the computed amount of noise with all previously saved amounts of noise under the same query type. Based on the comparison result, \texttt{Blockchain} will completely reuse old responses or call \texttt{Server}.

\textit{Manage privacy budget}: \texttt{Blockchain} updates the privacy budget as queries are answered and the Blockchain ensures no new privacy cost will be incurred for answering queries once the specified privacy budget is exhausted.

 \begin{algorithm*}
\setstretch{0.8}
\caption{Our proposed algorithm to answer the $m$-th query and adjust remaining privacy cost.} \label{algmain} 
 \begin{algorithmic}
 \STATE \textbf{Input:} $D$: dataset;
 ${Q}_m$: the $m$-th query;  $(\epsilon_m,\delta_m)$: requested privacy parameters for query $Q_m$; $(\sqrt{\epsilon\textup{\_squared\_remaining\_budget}},\delta_{\textup{budget}})$: remaining privacy budget (at the beginning, it is $(\sqrt{\epsilon\textup{\_squared\_budget}},\delta_{\textup{budget}})$ for $\epsilon\textup{\_squared\_budget} = {\epsilon_{\textup{budget}}}^2$);  $\Delta_{Q_m}$: $\ell_2$~sensitivity of query $Q_m$;  
\STATE \textbf{Output:} $\widetilde{Q}_m(D)$: noisy query response for query $Q_m$ on dataset $D$ under $(\epsilon_m, \delta_m)$-differential privacy;
\end{algorithmic}
\begin{algorithmic}[1]
\STATE ${\sigma}_m \leftarrow \Gaussian(\Delta_{Q_m}, \epsilon_m, \delta_m)$; \label{computesigma} //{\em \textbf{Comment:} From Lemma \ref{lemma-Gaussian}, it holds that $\Gaussian(\Delta_{Q_m}, \epsilon_m, \delta_m): \vspace{-0pt}=\sqrt{2\ln\frac{ 1.25}{\delta_m}}\times\frac{\Delta_{Q_m}}{\epsilon_m}$.}
        \IF {the query $Q_m$ is seen for the first time}
         \STATE \texttt{Client} computes $\epsilon\textup{\_squared\_cost}$ such that $ \Gaussian(\Delta_{Q_m}, \sqrt{\epsilon\textup{\_squared\_cost}}, \delta_{\textup{budget}}) = \sigma_m  $; \label{case1update1} \label{convertprivacy}  \STATE  //\vspace{-3pt}{\em \textbf{Comment:} This means \mbox{$\sqrt{2\ln\frac{ 1.25}{\delta_{\textup{budget}}}}\times\frac{\Delta_{Q_m}}{\sqrt{\epsilon\textup{\_squared\_cost}}} = \sigma_m  $, where $ \sigma_m $ as $\Gaussian(\Delta_{Q_m}, \epsilon_m, \delta_m)$ is $ \sqrt{2\ln\frac{ 1.25}{\delta_m}}\times\frac{\Delta_{Q_m}}{\epsilon_m}$.}} 
            \STATE \texttt{Client} computes $\epsilon\textup{\_squared\_remaining\_budget} \leftarrow \epsilon\textup{\_squared\_remaining\_budget} - \epsilon\textup{\_squared\_cost}$;\label{case1update2} 
         \IF {$\epsilon\textup{\_squared\_remaining\_budget}  \geq 0$}
            \STATE \textbf{return} $\widetilde{Q}_m(D) \leftarrow {Q}_m(D) + \mathcal{N} (0, 1) \times \sigma_m$; \label{alg-seen-first-time} 
            //{\em \textbf{Comment:} We refer to this Case~1) in Section~\ref{case1}. 
            If ${Q}_m$ is multidimensional, independent Gaussian noise will be added to each dimension.}
            \STATE \texttt{Blockchain} records $\langle Q_m\textup{'s query type}, \epsilon_m, \delta_m, \sigma_m, \widetilde{Q}_m(D)  \rangle$; //{\em \textbf{Comment:} This information will be kept together with a cryptographic hash of the dataset $D$, which \texttt{Blockchain} stores so it knows which records are for the same dataset $D$.}
            \ELSE
                 \STATE \textbf{return} an error of insufficient privacy budget;
            \ENDIF
            \ELSE
            \STATE Suppose $Q_m$ is a type $t$-query. \texttt{Blockchain} compares $\sigma_m$ with values in $\boldsymbol{\Sigma}_t:=\{\sigma_j: \sigma_j $  has been recorded in \texttt{Blockchain} and $  Q_j $ is a type $t$-query$\}$ (i.e., $\boldsymbol{\Sigma}_t$ consists of the corresponding noise amounts for previous instances of type $t$-query), resulting in the following subcases.
            \IF{there exists $\sigma_j \in \boldsymbol{\Sigma}_t$ such that $\sigma_m = \sigma_{j}$}
                \STATE \texttt{Blockchain} returns $\widetilde{Q}_m(D) \leftarrow \widetilde{Q}_j(D)$; \label{codeforcase2a} //{\em \textbf{Comment:} We refer to this Case~2A) in Section~\ref{case2a}}.
 \ELSIF{$\sigma_m < \min(\boldsymbol{\Sigma}_t)$}
            \STATE //{\em \textbf{Comment:} \underline{\textbf{The case of  partially reusing an old noise}}}:
          \STATE   \texttt{Client} computes $\epsilon\textup{\_squared\_cost}$ such that $[ \Gaussian(\Delta_{Q_m}, \sqrt{\epsilon\textup{\_squared\_cost}}, \delta_{\textup{budget}}) ]^{-2} = {\sigma_m}^{-2} - [\min(\boldsymbol{\Sigma}_t)]^{-2}$; \label{case2bupdate1} 
            \STATE \texttt{Client} computes $\epsilon\textup{\_squared\_remaining\_budget} \leftarrow \epsilon\textup{\_squared\_remaining\_budget} - \epsilon\textup{\_squared\_cost}$; \label{case2bupdate2} 
                    \IF {$\epsilon\textup{\_squared\_remaining\_budget} \vspace{2pt}  \geq 0$}
                \STATE
                \texttt{Blockchain} computes $\textup{NoiseReuseRatio} \leftarrow \frac{{\sigma_m}^2}{[\min(\boldsymbol{\Sigma}_t)]^2}$ and $\textup{AdditionalNoise} \leftarrow \mathcal{N}(0, 1) \times \sqrt{{\sigma_m}^2-\frac{{\sigma_m}^4}{[\min(\boldsymbol{\Sigma}_t)]^2}}$
                 \STATE \texttt{Blockchain} contacts \texttt{Server} to compute \\ $\widetilde{Q}_m(D) \leftarrow Q_m(D)+ \textup{NoiseReuseRatio} \times [\widetilde{Q}_{t,\textup{min}}(D)-Q_m(D) ] +  \textup{AdditionalNoise}$, where $\widetilde{Q}_{t,\textup{min}}(D)$ denotes the noisy response (kept in \texttt{Blockchain}) corresponding to $\min(\boldsymbol{\Sigma}_t)$; \label{codeforcase2b} //{\em \textbf{Comment:} \vspace{0pt} We refer to this Case~2B) in Section~\ref{case2b}}. 
            \STATE \texttt{Blockchain} records $\langle Q_m\textup{'s query type}, \epsilon_m, \delta_m, \sigma_m, \widetilde{Q}_m(D)  \rangle$;
                \ELSE
                 \STATE \textbf{return} an error of insufficient privacy budget;
                    \ENDIF
            \ELSE
               \STATE //{\em \textbf{Comment:} \underline{\textbf{The  case of fully reusing an old noise}}}: 
               \STATE With $\sigma_{\ell}$ denoting the maximal possible value in $\boldsymbol{\Sigma}_t$ that is also smaller than $\sigma_m$, \texttt{Blockchain} reuses $\widetilde{Q}_{\ell}(D)$, which denotes the noisy response (kept in \texttt{Blockchain}) corresponding to $\sigma_{\ell}$; 
               \STATE \texttt{Blockchain} computes $\widetilde{Q}_m(D) \leftarrow \widetilde{Q}_{\ell}(D) + \mathcal{N}(0, 1) \times \sqrt{{\sigma_m}^2 - {\sigma_{\ell}}^2}$; \label{codeforcase2c} //{\em \textbf{Comment:} We refer to this Case~2C) in Section~\ref{case2c}}.
            \STATE \texttt{Blockchain} records $\langle Q_m\textup{'s query type}, \epsilon_m, \delta_m, \sigma_m, \widetilde{Q}_m(D)  \rangle$;
        \ENDIF
         \ENDIF
\end{algorithmic}\label{algorithm}
 \end{algorithm*}

\begin{table*}[] \label{tableexample-algmain}
\scriptsize \setlength{\tabcolsep}{1.1pt} \caption{An example to explain Algorithm~\ref{algmain}.}
\label{tableexample-algmain}
\hspace{-9pt}\begin{tabular}{|l|l|l|l|l|l|l|l|l|l|l|l|l|l|}
\hline
\hspace{-1pt}$Q_m$'s query type\hspace{-1pt}                                                              & \begin{tabular}[c]{@{}l@{}}\hspace{-1pt}$Q_1\hspace{-2pt}=$\hspace{0pt}type-1\hspace{-2pt}\end{tabular} & \begin{tabular}[c]{@{}l@{}}\hspace{-1pt}$Q_2\hspace{-2pt}=$\hspace{0pt}type-2\hspace{-2pt}\end{tabular} & \begin{tabular}[c]{@{}l@{}}\hspace{-1pt}$Q_3\hspace{-2pt}=$\hspace{0pt}type-3\hspace{-2pt}\end{tabular} & \begin{tabular}[c]{@{}l@{}}\hspace{-1pt}$Q_4\hspace{-2pt}=$\hspace{0pt}type-1\hspace{-2pt}\end{tabular} & \begin{tabular}[c]{@{}l@{}}\hspace{-1pt}$Q_5\hspace{-2pt}=$\hspace{0pt}type-2\hspace{-2pt}\end{tabular} & \begin{tabular}[c]{@{}l@{}}\hspace{-1pt}$Q_6\hspace{-2pt}=$\hspace{0pt}type-1\hspace{-2pt}\end{tabular} & \begin{tabular}[c]{@{}l@{}}\hspace{-1pt}$Q_7\hspace{-2pt}=$\hspace{0pt}type-3\hspace{-2pt}\end{tabular} & \begin{tabular}[c]{@{}l@{}}\hspace{-1pt}$Q_8\hspace{-2pt}=$\hspace{0pt}type-2\hspace{-2pt}\end{tabular} & \begin{tabular}[c]{@{}l@{}}\hspace{-1pt}$Q_9\hspace{-2pt}=$\hspace{0pt}type-2\hspace{-2pt}\end{tabular} & \begin{tabular}[c]{@{}l@{}}\hspace{-1pt}$Q_{10}\hspace{-2pt}=$\hspace{0pt}type-1\hspace{-2pt}\end{tabular} & \begin{tabular}[c]{@{}l@{}}\hspace{-1pt}$Q_{11}\hspace{-2pt}=$\hspace{0pt}type-2\hspace{-2pt}\end{tabular} & \begin{tabular}[c]{@{}l@{}}\hspace{-1pt}$Q_{12}\hspace{-2pt}=$\hspace{0pt}type-1\hspace{-2pt}\end{tabular} & \begin{tabular}[c]{@{}l@{}}\hspace{-1pt}$Q_{13}\hspace{-2pt}=$\hspace{0pt}type-3\hspace{-2pt}\end{tabular} \\ \hline
\begin{tabular}[c]{@{}l@{}}\hspace{-1pt}$\sigma_m$ computed by\hspace{-1pt} \\ \hspace{-1pt}Line~\ref{computesigma} of Alg.~\ref{algmain}\hspace{-2pt}\end{tabular}                                                &
\hspace{-1pt}$\sigma_1=1$\hspace{-2pt}                                            & \hspace{-1pt}$\sigma_2=3$\hspace{-2pt}                                            & \hspace{-1pt}$\sigma_3=2$\hspace{-2pt}                                            & \hspace{-1pt}$\sigma_4=2.5$\hspace{-2pt}                                          & \hspace{-1pt}$\sigma_5=2$\hspace{-2pt}                                            & \hspace{-1pt}$\sigma_6=0.5$\hspace{-2pt}                                          & \hspace{-1pt}$\sigma_7=2$\hspace{-2pt}                                            & \hspace{-1pt}$\sigma_8=2.5$\hspace{-2pt}                                          & \hspace{-1pt}$\sigma_9=1.5$\hspace{-2pt}                                          & \hspace{-1pt}$\sigma_{10}=0.25$\hspace{-2pt}                                         & \hspace{-1pt}$\sigma_{11}=1$\hspace{-2pt}                                            & \hspace{-1pt}$\sigma_{12}=0.75$\hspace{-2pt}                                         & \hspace{-1pt}$\sigma_{13}=1.5$\hspace{-2pt}                                          \\ \hline
\begin{tabular}[c]{@{}l@{}}\hspace{-1pt}Case involved\hspace{-2pt} \\ \hspace{-1pt}in  Alg.~\ref{algmain}\hspace{-2pt}\end{tabular}                                                      & \begin{tabular}[c]{@{}l@{}}\hspace{-1pt}1):\hspace{-2pt} $\widetilde{Q}_1 \hspace{-2pt}\leftarrow\hspace{-2pt} Q_1 $\hspace{-2pt} \\ \hspace{12pt} $+$ \\ \hspace{0pt}$ \mathcal{N} (0,\hspace{-2pt} 1)\hspace{-2pt}\times \hspace{-2pt}\sigma_1  $\hspace{-1pt} \\   \hspace{-1pt}with\hspace{-2pt} \\    \hspace{-1pt}accessing \hspace{-1pt}$D$\hspace{-3pt}\end{tabular}                                                   & \begin{tabular}[c]{@{}l@{}}\hspace{-1pt}1):\hspace{-2pt} $\widetilde{Q}_2 \hspace{-2pt}\leftarrow\hspace{-2pt} Q_2 $\hspace{-2pt} \\ \hspace{12pt} $+$ \\ \hspace{0pt}$ \mathcal{N} (0,\hspace{-2pt} 1)\hspace{-2pt}\times \hspace{-2pt}\sigma_2$\hspace{-1pt}   \\   \hspace{-1pt}with\hspace{-2pt} \\    \hspace{-1pt}accessing \hspace{-1pt}$D$\hspace{-3pt}\end{tabular}            & \begin{tabular}[c]{@{}l@{}}\hspace{-1pt}1):\hspace{-2pt} $\widetilde{Q}_3 \hspace{-2pt}\leftarrow\hspace{-2pt} Q_3 $\hspace{-2pt} \\ \hspace{12pt} $+$ \\ \hspace{0pt}$ \mathcal{N} (0,\hspace{-2pt} 1)\hspace{-2pt}\times \hspace{-2pt}\sigma_3$\hspace{-1pt}   \\   \hspace{-1pt}with\hspace{-2pt} \\    \hspace{-1pt}accessing \hspace{-1pt}$D$\hspace{-3pt}\end{tabular}                                              & \begin{tabular}[c]{@{}l@{}}\hspace{-1pt}2C): $\widetilde{Q}_4 $\hspace{-2pt} \\ \hspace{-1pt}reuses $\widetilde{Q}_1$\hspace{-2pt} \\   \hspace{-1pt}without\hspace{-2pt} \\   \hspace{-1pt}accessing \hspace{-1pt}$D$\hspace{-3pt}\end{tabular}                                                   & \begin{tabular}[c]{@{}l@{}}\hspace{-1pt}2B): $\widetilde{Q}_5 $\hspace{-2pt} \\ \hspace{-1pt}reuses $\widetilde{Q}_2$\hspace{-2pt} \\   \hspace{-1pt}with\hspace{-2pt} \\    \hspace{-1pt}accessing \hspace{-1pt}$D$\hspace{-3pt}\end{tabular}                                                      & \begin{tabular}[c]{@{}l@{}}\hspace{-1pt}2B): $\widetilde{Q}_6 $\hspace{-2pt} \\ \hspace{-1pt}reuses $\widetilde{Q}_1$\hspace{-2pt} \\  \hspace{-1pt}with\hspace{-2pt} \\    \hspace{-2pt}accessing \hspace{-1pt}$D$\hspace{-3pt}\end{tabular}                                                     & \begin{tabular}[c]{@{}l@{}}\hspace{-1pt}2A): $\widetilde{Q}_7 $\hspace{-2pt} \\ \hspace{-1pt}reuses $\widetilde{Q}_3$\hspace{-2pt} \\   \hspace{-1pt}without\hspace{-2pt} \\   \hspace{-1pt}accessing \hspace{-1pt}$D$\hspace{-3pt}\end{tabular}                                                      &  \begin{tabular}[c]{@{}l@{}}\hspace{-1pt}2C): $\widetilde{Q}_8 $\hspace{-2pt} \\ \hspace{-1pt}reuses $\widetilde{Q}_5$\hspace{-2pt} \\   \hspace{-1pt}without\hspace{-2pt} \\   \hspace{-1pt}accessing \hspace{-1pt}$D$\hspace{-3pt}\end{tabular}                                                     & \begin{tabular}[c]{@{}l@{}}\hspace{-1pt}2B): $\widetilde{Q}_9 $\hspace{-2pt} \\ \hspace{-1pt}reuses $\widetilde{Q}_5$\hspace{-2pt} \\    \hspace{-1pt}with\hspace{-2pt} \\    \hspace{-2pt}accessing \hspace{-1pt}$D$\hspace{-3pt}\end{tabular}                                                      & \begin{tabular}[c]{@{}l@{}}\hspace{0pt}2B): $\widetilde{Q}_{10} $\hspace{-2pt} \\ \hspace{0pt}reuses $\widetilde{Q}_6$\hspace{-2pt} \\    \hspace{0pt}with\hspace{-2pt} \\    \hspace{0pt}accessing \hspace{-1pt}$D$\hspace{-3pt}\end{tabular}                                                       & \begin{tabular}[c]{@{}l@{}}\hspace{0pt}2B): $\widetilde{Q}_{11} $\hspace{-2pt} \\ \hspace{0pt}reuses $\widetilde{Q}_9$\hspace{-2pt} \\    \hspace{0pt}with\hspace{-2pt} \\    \hspace{0pt}accessing \hspace{-1pt}$D$\hspace{-3pt}\end{tabular}                                                        & \begin{tabular}[c]{@{}l@{}}\hspace{0pt}2C): $\widetilde{Q}_{12} $\hspace{-2pt} \\ \hspace{0pt}reuses $\widetilde{Q}_{6}$\hspace{-2pt} \\   \hspace{0pt}without\hspace{-2pt} \\   \hspace{0pt}accessing \hspace{-1pt}$D$\hspace{-3pt}\end{tabular}                                                       & \begin{tabular}[c]{@{}l@{}}\hspace{0pt}2B): $\widetilde{Q}_{13} $\hspace{-2pt} \\ \hspace{0pt}reuses $\widetilde{Q}_7$\hspace{-2pt} \\    \hspace{0pt}with\hspace{-2pt} \\    \hspace{0pt}accessing \hspace{-1pt}$D$\hspace{-3pt}\end{tabular}                                                        \\ \hline
\end{tabular}
\end{table*}


\subsection{Adversary Model} The adversary model for our system is similar to~\cite{yang2017differentially}. Assume that there are two kinds of adversaries:

First, adversaries can obtain perturbed query results.  They may try to infer users' real information using perturbed queries' results.

Second, adversaries attempt to modify the privacy budget. For example, they would like to decrease the used privacy budget so that users may exceed the privacy budget. As a result, privacy will leak. However, in our case, the privacy budget is recorded on the blockchain. The adversaries cannot tamper it once the privacy budget is stored in the blockchain.

\subsection{Our Algorithm~\ref{algmain} based on Reusing Noise} \label{subsection-explain-algmain}

We present our solution for reusing noise in Algorithm~\ref{algmain} in Section~\ref{section:algo-explain}. We consider real-valued queries so that the Gaussian mechanism can be used. Extensions to non-real-valued queries can be regarded as the future work, where we can apply the exponential mechanism of \cite{mcsherry2007mechanism}.




To clarify notation use, we note that $Q_i$ means the $i$-th query (ordered  chronologically) and is answered by a randomized algorithm $\widetilde{Q}_i$. A type $t$-query means that the query's type is $t$. Queries asked at different time can have the same query type. This is the reason that we reuse noise in Algorithm~\ref{algmain}.

Suppose a dataset $D$ has been used to answer \mbox{$m-1$}  queries $Q_1, Q_2, \ldots, Q_{m-1}$, where the $i$-th query $Q_i$ for $i=1,2,\ldots,m-1$ is answered under $(\epsilon_i, \delta_i)$-differential privacy (by reusing noise, or generating fresh noise, or combining both). For $i=1,2,\ldots,m$, we  define $\sigma_i:=\Gaussian(\Delta_{Q_i}, \epsilon_i, \delta_i)$, where $\Delta_{Q_i}$ denotes the $\ell_2$-sensitivity of $Q_i$, where we defer the discussion of $\Delta_{Q_i}$ to Section~\ref{subsec-ell2-sensitivity}.  As presented in Algorithm~\ref{algmain}, we have several cases discussed below. For better understanding of these cases, we later discuss an example given in Table~\ref{tableexample-algmain} in Section~\ref{tableexample-algmain}.

\begin{enumerate}
{\setlength\itemindent{29pt}\item[\textbf{Case~1):}]\label{case1} If $Q_{m}$ is seen for the first time, we obtain the noisy response $\widetilde{Q}_m(D)$ by adding a zero-mean Gaussian noise with standard deviation $\Gaussian(\Delta_{Q_{m}}, \epsilon_{m}, \delta_{m})$ independently to each dimension of the true result $Q_m(D)$ (if the privacy budget allows), as given by Line~\ref{alg-seen-first-time} of Algorithm~\ref{algmain}, where $\Gaussian(\Delta_{Q_m}, \epsilon_m, \delta_m): =\sqrt{2\ln\frac{ 1.25}{\delta_m}}\times\frac{\Delta_{Q_m}}{\epsilon_m}$  from Lemma~\ref{lemma-Gaussian}. 
\item[\textbf{Case~2):}]\label{case2} If $Q_{m}$ has been received before, suppose $Q_m$ is a type $t$-query, and among the previous $m-1$ queries $Q_1, Q_2, \ldots, Q_{m-1}$, let $\boldsymbol{\Sigma}_t$ consist of the corresponding noise amounts for previous instances of type $t$-query; i.e., $\boldsymbol{\Sigma}_t:=\{\sigma_j: \sigma_j $  has been recorded in \texttt{Blockchain} and $  Q_j $ is a type $t$-query$\}$. 

\texttt{Blockchain} compares $\sigma_m$ and the values in $\boldsymbol{\Sigma}_t$, resulting in the following subcases.}

\begin{enumerate}
{\setlength\itemindent{38pt}\item[\textbf{Case~2A):}]\label{case2a} If there exists $\sigma_j \in \boldsymbol{\Sigma}_t$ such that $\sigma_m = \sigma_{j}$, then $\widetilde{Q}_m(D)$ is set as $ \widetilde{Q}_j(D)$.
\item[\textbf{Case~2B):}]\label{case2b} This case considers that $\sigma_m $ is less than $\min(\boldsymbol{\Sigma}_t)$ which denotes the minimum in $\boldsymbol{\Sigma}_t$. Let $\widetilde{Q}_{t,\textup{min}}(D)$ denote the noisy response (kept in \texttt{Blockchain}) corresponding to $\min(\boldsymbol{\Sigma}_t)$; specifically, if $\min(\boldsymbol{\Sigma}_t) = \sigma_j $ for some $j$, then $\widetilde{Q}_{t,\textup{min}}(D) = \widetilde{Q}_j(D)$. Under $\sigma_m < \min(\boldsymbol{\Sigma}_t)$, to minimize the privacy cost, we reuse $\frac{{\sigma_m}^2}{[\min(\boldsymbol{\Sigma}_t)]^2}$ fraction of noise in $\widetilde{Q}_{t,\textup{min}}(D)$ to generate $\widetilde{Q}_m(D)$ (if the privacy budget allows). This will be obtained by Theorem~\ref{thm-Alg1-explain-privacy-cost-with-r}'s Result~(ii) to be presented in Section~\ref{section:algo-explain}. Specifically, under $\min(\boldsymbol{\Sigma}_t) > \sigma_m$, as given by Line~\ref{codeforcase2b} of Algorithm~\ref{algmain},  $\widetilde{Q}_m(D)$ is set by $\widetilde{Q}_m(D) \leftarrow Q_m(D)+ \frac{{\sigma_m}^2}{[\min(\boldsymbol{\Sigma}_t)]^2} \times [\widetilde{Q}_{t,\textup{min}}(D)-Q_m(D) ] +   \mathcal{N}(0, 1) \times \sqrt{{\sigma_m}^2-\frac{{\sigma_m}^4}{[\min(\boldsymbol{\Sigma}_t)]^2}}$. Note that if ${Q}_m$ is multidimensional, independent Gaussian noise will be added to each dimension according to the above formula. This also applies to other places of this paper.
\item[\textbf{Case~2C):}]\label{case2c} This case considers that $\sigma_m $ is greater than $\min(\boldsymbol{\Sigma}_t)$ and $\sigma_m $ is different from all values in $\boldsymbol{\Sigma}_t$. Let $\sigma_{\ell}$ be the maximal possible value in $\boldsymbol{\Sigma}_t$ that is also smaller than $\sigma_m$; i.e., $\sigma_{\ell} = \max \{\sigma_j: \sigma_j \in \boldsymbol{\Sigma}_t \textup{ and } \sigma_j< \sigma_m\} $. Then $\widetilde{Q}_m(D)$ is set as $\widetilde{Q}_{\ell}(D) + \mathcal{N}(0, 1) \times \sqrt{{\sigma_m}^2 - {\sigma_{\ell}}^2}$.} This will become clear by Theorem~\ref{thm-Alg1-explain-privacy-cost-with-r}'s Result~(ii) to be presented in Section~\ref{section:algo-explain}.
\end{enumerate}
\end{enumerate}

\textbf{An example to explain Algorithm~\ref{algmain}.} Table~\ref{tableexample-algmain} provides an example for better understanding of Algorithm~\ref{algmain}.  We consider three types of queries. In particular, $Q_1,Q_4,Q_6,Q_{10},Q_{12}$ are type $1$-queries; $Q_2,Q_5,Q_8,Q_9,Q_{11}$ are  type $2$-queries, and $Q_3,Q_7,Q_{13}$ are  type $3$-queries.

 \subsection{Explaining the Noise Reuse Rules of Algorithm~\ref{algmain}} \label{section:algo-explain}

Our noise-reuse rules of Algorithm~\ref{algmain} are designed to minimize the accumulated privacy cost. 
To explain this, inspired by~\cite{abadi2016deep}, we define the privacy loss to quantify privacy cost.  We analyze the privacy loss to characterize how privacy degrades in a fine-grained manner, instead of using the composition theorem by Kairouz~\emph{et~al.}~\cite{kairouz2017composition}. Although \cite{kairouz2017composition} gives the state-of-the-art results for the composition of differentially private algorithms, the results do not assume the underlying mechanisms to achieve differential privacy. In our analysis, by analyzing  the privacy loss of Gaussian mechanisms specifically, we can obtain smaller privacy cost.

For a randomized algorithm $Y$, neighboring datasets $D$ and $D'$, and output $y$, the privacy loss $  L_{Y}(D, D';y) $ represents the multiplicative difference between the probabilities that the same output $y$ is observed when the randomized algorithm $Y$ is applied to $D$ and $D'$. Specifically, we define 
\begin{align}
  L_{Y}(D, D';y)  := \ln \frac{\fr{Y(D)=y}}{\fr{Y(D')=y}}, \label{eqn-L-Y-D-Dprime}
\end{align}
where $\fr{\cdot}$ denotes the probability density function.

For simplicity, we use probability density function $\fr{\cdot}$ in Eq.~(\ref{eqn-L-Y-D-Dprime}) above by assuming that the randomized algorithm $Y$ has the continuous output. If $Y$ has the discrete output, we replace $\fr{\cdot}$ by probability mass function $\bp{\cdot}$.

When $y$ follows the probability distribution of random variable $Y(D)$, $  L_{Y}(D, D';y) $ follows the probability distribution of random variable ${L}_{Y}(D, D';Y(D))$, which we write as ${L}_{Y}(D, D')$ for simplicity. 






We denote the composition of some randomized mechanisms $Y_1, Y_2, \ldots, Y_m$ for a positive integer $m$ by $Y_1 \Vert Y_2 \Vert  \ldots \Vert Y_m$.  
For the composition, the privacy loss with respect to neighboring datasets $D$ and $D'$ when the outputs of randomized mechanisms $Y_1, Y_2, \ldots, Y_m$ are $y_1, y_2, \ldots, y_m$ is defined by
\begin{align} 
&  L_{Y_1 \Vert Y_2 \Vert  \ldots \Vert Y_m}(D, D';y_1, y_2, \ldots, y_m)   \nonumber
  \\ &:= 
  \ln \frac{ \mathbb{F}\big[\cap_{i=1}^m \left[Y_i(D) = y_i\right]\big] }{  \mathbb{F}\big[\cap_{i=1}^m \left[Y_i(D') = y_i \right]\big]}. \nonumber
 \end{align}
When $y_i$ follows the probability distribution of random variable $Y_i(D)$ for each $i\in \{1,2,\ldots, m\}$, clearly $L_{Y_1 \Vert Y_2 \Vert  \ldots \Vert Y_m}(D, D';y_1, y_2, \ldots, y_m)  $ follows the probability distribution of random variable $L_{Y_1 \Vert Y_2 \Vert  \ldots \Vert Y_m}(D, D';Y_1(D), Y_2(D), \ldots, Y_m(D)) $, which we write as $L_{Y_1 \Vert Y_2 \Vert  \ldots \Vert Y_m}(D, D')$ for simplicity.

With the privacy loss defined above, we now analyze how to reuse noise when a series of queries are answered under differential privacy. To this end, we present Theorem~\ref{thm-Alg1-explain-privacy-cost-with-r}, which presents the optimal ratio of reusing noise to minimize privacy cost.


\begin{thm}[\textbf{Optimal ratio of reusing noise to minimize privacy cost}]  \label{thm-Alg1-explain-privacy-cost-with-r}
Suppose that before answering query $Q_m$ and after answering $Q_1, Q_2, \ldots, Q_{m-1}$, the privacy loss $L_{\widetilde{Q}_1 \Vert  \widetilde{Q}_2 \Vert   \ldots  \Vert   \widetilde{Q}_{m-1}}(D, D')$ is given by $\mathcal{N}(\frac{A(D, D')}{2},A(D, D'))$ for some $A(D, D')$. For the $m$-th query $Q_m$, suppose that $Q_m$ is the same as $Q_j$ for some $j\in\{1,2,\ldots,m-1\}$ and we reuse $r$  fraction of noise in $\widetilde{Q}_{j}(D)$ to generate $\widetilde{Q}_m(D)$ for $0\leq r \leq 1$ satisfying ${\sigma_m}^2 - r^2 {\sigma_j}^2>0$, where $r$ is a constant to be decided. If $\widetilde{Q}_{j}(D)-Q_j(D)$ follows a Gaussian probability distribution with mean $0$ and standard deviation $\sigma_j$, we generate the noisy response $\widetilde{Q}_m(D)$ to answer query $Q_m$ as follows:
\begin{align} 
&\widetilde{Q}_m(D)  \nonumber
\\ &  \leftarrow Q_m(D)+ r [\widetilde{Q}_j(D) -Q_j(D) ] + \mathcal{N}(0,{\sigma_m}^2 - r^2 {\sigma_j}^2), \label{reuse-expr2}  
\end{align} 
so that $\widetilde{Q}_{m}(D)-Q_m(D)$ follows a Gaussian probability distribution with mean $0$ and standard deviation $\sigma_m$.

Note that $\Delta_{Q_m}$ and $\Delta_{Q_j}$ are the same since $Q_m$ and $Q_j$ are the same. Then we have the following results.
\begin{itemize}
\item[(i)]
After answering the $m$ queries $Q_1, Q_2, \ldots, Q_{m}$, the privacy loss $L_{\widetilde{Q}_1 \Vert  \widetilde{Q}_2 \Vert   \ldots  \Vert   \widetilde{Q}_m}(D, D')$ will be $\mathcal{N}(\frac{B_r(D, D')}{2},B_r(D, D'))$ for  $B_r(D, D'): = A(D, D') + \frac{[\| Q_m(D) -Q_m(D') \|_2]^2 (1 - r )^2}{{\sigma_m}^2 - r^2 {\sigma_j}^2}$.
\item[(ii)] We clearly require $ r \geq 0$ and ${\sigma_m}^2 - r^2 {\sigma_j}^2 \geq 0$ in~(\ref{reuse-expr2}) above (note that $\mathcal{N}(0,0)\equiv 0$). To minimize the total privacy cost (which is equivalent to minimize $B_r(D, D')$ above), the optimal $r$ is given by 
\begin{align} 
 r_{\textup{optimal}} = \begin{cases}  1, &\textup{if } \sigma_m \geq \sigma_j , \\[2pt]
 \big( \frac{\sigma_m}{\sigma_j} \big)^2, &\textup{if } \sigma_m < \sigma_j,
\end{cases} 
\label{roptimal}  
\end{align} 
so that substituting Eq.~(\ref{roptimal})  into the expression of $B_r(D, D')$ gives
\begin{align} 
 & B_{r_{\textup{optimal}}}(D, D')  \nonumber
\\ & \hspace{-2pt}  =\hspace{-1pt}  \begin{cases} \hspace{-2pt} A(D, D'),  \textup{ if } \sigma_m \geq  \sigma_j ; \\[2pt]
\hspace{-2pt} A(D, D') +  [\| Q_m(D) -Q_m(D') \|_2]^2 \left(\hspace{-1pt}  \frac{1}{{\sigma_m}^2} -  \frac{1}{{\sigma_j}^2} \hspace{-1pt} \right)\hspace{-2pt}, \\ ~~~~~~~~~~~\hspace{7pt}\textup{if } \sigma_m < \sigma_j.
\end{cases} 
\label{Brresult}  
\end{align}  
Note that if $\sigma_m = \sigma_j$ for some $j\in\{1,2,\ldots,m-1\}$, we have $r_{\textup{optimal}}=1$ and just set $\widetilde{Q}_m(D)$ as $\widetilde{Q}_j(D)$.
\end{itemize}
\end{thm}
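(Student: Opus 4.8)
The plan is to treat the statement as the single inductive step of an induction on the number of answered queries (the base case being one Gaussian mechanism $Q(D)+\mathcal{N}(0,\sigma^2)$, whose privacy loss a direct density computation puts in the canonical form $\mathcal{N}(\frac{a}{2},a)$ with $a=[\|Q(D)-Q(D')\|_2]^2/\sigma^2$); so here I only need to propagate this shape through the reuse step~(\ref{reuse-expr2}). To do so I would peel off the last query by conditioning on the outcomes of $Q_1,\dots,Q_{m-1}$: writing $y_1,\dots,y_m$ for generic outputs and applying the chain rule for densities, $L_{\widetilde{Q}_1\Vert\cdots\Vert\widetilde{Q}_m}(D,D';y_1,\dots,y_m)$ splits as $L_{\widetilde{Q}_1\Vert\cdots\Vert\widetilde{Q}_{m-1}}(D,D';y_1,\dots,y_{m-1})$ plus the conditional log-ratio $\ln\!\big(\fr{\widetilde{Q}_m(D)=y_m \mid \cap_{i<m}[\widetilde{Q}_i(D)=y_i]}\big/\fr{\widetilde{Q}_m(D')=y_m \mid \cap_{i<m}[\widetilde{Q}_i(D')=y_i]}\big)$. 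By~(\ref{reuse-expr2}) and since $Q_j=Q_m$, once we condition on $\widetilde{Q}_j(D)=y_j$ the variable $\widetilde{Q}_m(D)$ is Gaussian with mean $(1-r)Q_m(D)+ry_j$ and variance ${\sigma_m}^2-r^2{\sigma_j}^2$, and conditionally independent of the other $y_i$; likewise for $D'$ with $Q_m(D)$ replaced by $Q_m(D')$ (each coordinate separately in the multidimensional case, the losses adding). Substituting the two Gaussian densities, the terms quadratic in $y_m$ cancel, and after evaluating at the realized outputs $y_i=\widetilde{Q}_i(D)$ the remaining conditional log-ratio is an affine function of the fresh noise $Z_m\sim\mathcal{N}(0,{\sigma_m}^2-r^2{\sigma_j}^2)$ injected in query $m$; as such it is distributed as $\mathcal{N}(\frac{C_r}{2},C_r)$ with $C_r:=\frac{(1-r)^2\,[\|Q_m(D)-Q_m(D')\|_2]^2}{{\sigma_m}^2-r^2{\sigma_j}^2}$.

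The crucial point is that this increment is a function of $Z_m$ alone, whereas $L_{\widetilde{Q}_1\Vert\cdots\Vert\widetilde{Q}_{m-1}}(D,D')$ is a function of the randomness used only in $Q_1,\dots,Q_{m-1}$; since $Z_m$ is fresh, the two summands are independent. Adding the hypothesized $\mathcal{N}(\frac{A(D,D')}{2},A(D,D'))$ and the independent $\mathcal{N}(\frac{C_r}{2},C_r)$ gives $\mathcal{N}(\frac{A(D,D')+C_r}{2},A(D,D')+C_r)$, which is Result~(i) with $B_r(D,D')=A(D,D')+C_r$; and the hypothesis $\widetilde{Q}_j(D)-Q_j(D)\sim\mathcal{N}(0,{\sigma_j}^2)$ additionally gives $\widetilde{Q}_m(D)-Q_m(D)\sim\mathcal{N}(0,{\sigma_m}^2)$, which keeps the invariant alive for the next step.

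For Result~(ii), since $A(D,D')$ and $[\|Q_m(D)-Q_m(D')\|_2]^2\ge 0$ do not depend on $r$, minimizing $B_r(D,D')$ simultaneously over all neighboring pairs reduces to minimizing $f(r):=\frac{(1-r)^2}{{\sigma_m}^2-r^2{\sigma_j}^2}$ over $\{r\in[0,1]:{\sigma_m}^2-r^2{\sigma_j}^2>0\}$. A short differentiation gives $f'(r)=\frac{-2(1-r)({\sigma_m}^2-r{\sigma_j}^2)}{({\sigma_m}^2-r^2{\sigma_j}^2)^2}$, so on the feasible set $f'$ has the sign of $-({\sigma_m}^2-r{\sigma_j}^2)$. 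If $\sigma_m\ge\sigma_j$, then $r=1$ is feasible (with the convention $\mathcal{N}(0,0)\equiv 0$) and $f(1)=0$ is the global minimum, so $r_{\textup{optimal}}=1$ and $B_{r_{\textup{optimal}}}(D,D')=A(D,D')$, the sub-case $\sigma_m=\sigma_j$ reading simply $\widetilde{Q}_m(D):=\widetilde{Q}_j(D)$. If $\sigma_m<\sigma_j$, then $f$ strictly decreases on $[0,({\sigma_m}/{\sigma_j})^2]$ and strictly increases on $[({\sigma_m}/{\sigma_j})^2,{\sigma_m}/{\sigma_j})$, so $r_{\textup{optimal}}=({\sigma_m}/{\sigma_j})^2$; substituting and simplifying yields $f(r_{\textup{optimal}})=\frac{1}{{\sigma_m}^2}-\frac{1}{{\sigma_j}^2}$, hence the stated value of $B_{r_{\textup{optimal}}}(D,D')$.

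The step I expect to be the main obstacle is the bookkeeping behind the phrase ``evaluating at the realized outputs'': one must check carefully that, after conditioning on the first $m-1$ outputs and substituting $y_i=\widetilde{Q}_i(D)$, all dependence on $Q_1,\dots,Q_{m-1}$ cancels and only $Z_m$ survives. This cancellation is precisely where the linear form of~(\ref{reuse-expr2}) and the identity $Q_j=Q_m$ enter, and it is what makes the new privacy-loss increment independent of the accumulated loss — the fact that drives the whole argument. The remaining pieces (the density chain rule, the Gaussian density algebra, and the one-variable optimization) are routine.
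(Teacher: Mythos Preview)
Your proposal is correct and follows essentially the same approach as the paper: split the $m$-query privacy loss via the density chain rule into the first $m-1$ queries' loss plus the conditional log-ratio for query $m$ given $\widetilde{Q}_j=y_j$, observe that this conditional term depends only on the fresh noise $Z_m$ (hence is independent of the accumulated loss and Gaussian with parameter $C_r$), and then minimize $f(r)=(1-r)^2/({\sigma_m}^2-r^2{\sigma_j}^2)$ by a first-derivative sign analysis. Your treatment of part~(ii) is in fact slightly cleaner than the paper's, which computes the same derivative but states the monotonicity intervals somewhat loosely before reaching the identical optimal $r$.
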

\begin{proof}
The proof is in Appendix~\ref{sec-proof-theorem-1}.
\end{proof}


Eq.~(\ref{roptimal}) of
Theorem~\ref{thm-Alg1-explain-privacy-cost-with-r} clearly indicates the noise use ratio $\frac{{\sigma_m}^2}{[\min(\boldsymbol{\Sigma}_t)]^2}$ of Case~2B) in Algorithm~\ref{algmain} (see Line~\ref{codeforcase2b} of Algorithm~\ref{algmain}), and the noise use ratio $1$ of Cases~2A) and~2C) in Algorithm~\ref{algmain} (see Lines~\ref{codeforcase2a} and~\ref{codeforcase2c} of Algorithm~\ref{algmain}).


By considering $r=0$ in Result (i) of Theorem~\ref{thm-Alg1-explain-privacy-cost-with-r}, we obtain Corollary~\ref{cor-Alg1-explain-privacy-cost-with1}, which presents the classical result on the privacy loss of a single run of the Gaussian mechanism.

\begin{cor}[]  \label{cor-Alg1-explain-privacy-cost-with1}

By considering $m=1$ in Result (i) of Theorem~\ref{thm-Alg1-explain-privacy-cost-with-r}, we have that for a randomized algorithm $\widetilde{Q}$ which adds Gaussian noise amount $\sigma$ to a query $Q$, the privacy loss with respect to neighboring datasets $D$ and $D'$ is given by $\mathcal{N}(\frac{A(D, D')}{2},A(D, D'))$ for $A(D, D') : = \frac{[\| Q(D) -Q(D') \|_2]^2}{{\sigma}^2} $.
\end{cor}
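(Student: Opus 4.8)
The plan is to derive Theorem~\ref{thm-Alg1-explain-privacy-cost-with-r} directly from the structure of the Gaussian mechanism, and then read off the corollary as the degenerate case. First I would compute the privacy loss of a single Gaussian mechanism from scratch: if $\widetilde{Q}(D) = Q(D) + \mathcal{N}(0,\sigma^2)$ (per coordinate), then $\fr{\widetilde{Q}(D)=y}$ and $\fr{\widetilde{Q}(D')=y}$ are both Gaussian densities centered at $Q(D)$ and $Q(D')$ respectively with the same variance $\sigma^2$, so the log-ratio in~(\ref{eqn-L-Y-D-Dprime}) is an affine function of $y$. Substituting $y = Q(D) + \mathcal{N}(0,\sigma^2)$ and collecting terms, one finds $L_{\widetilde{Q}}(D,D')$ is Gaussian with mean $\frac{\|Q(D)-Q(D')\|_2^2}{2\sigma^2}$ and variance $\frac{\|Q(D)-Q(D')\|_2^2}{\sigma^2}$; i.e. it has the ``mean $=$ half the variance'' form $\mathcal{N}(\frac{A}{2},A)$ with $A = \frac{\|Q(D)-Q(D')\|_2^2}{\sigma^2}$.

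Since the whole theorem (Result (i)) is what we want, and the corollary asks only for the case $m=1$, the main work is already done once Result (i) is in hand: setting $m=1$ means there are no prior queries $Q_1,\ldots,Q_{m-1}$, so the ``incoming'' privacy loss is the zero random variable, i.e. $A(D,D') = 0$ in the theorem's hypothesis, and $\widetilde{Q}_1 = \widetilde{Q}$ is just a fresh Gaussian mechanism with noise $\sigma$. Then Result (i) (or equivalently taking $r=0$, which forces $\widetilde Q_m$ to be generated with entirely fresh noise $\mathcal N(0,\sigma_m^2)$) yields that the posterior privacy loss $L_{\widetilde{Q}}(D,D')$ is $\mathcal{N}(\frac{B(D,D')}{2}, B(D,D'))$ with $B(D,D') = 0 + \frac{\|Q(D)-Q(D')\|_2^2 \,(1-0)^2}{\sigma^2 - 0} = \frac{\|Q(D)-Q(D')\|_2^2}{\sigma^2}$, which is exactly the claimed $A(D,D')$. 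So the corollary is immediate from the theorem.

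The one genuinely necessary verification is that the privacy loss of a composition still has the canonical form $\mathcal{N}(\frac{A}{2},A)$, and that Gaussian noises combine additively — that is, if $\widetilde{Q}_j(D)-Q_j(D) \sim \mathcal{N}(0,\sigma_j^2)$ and we form $\widetilde{Q}_m(D)$ as in~(\ref{reuse-expr2}) by reusing an $r$-fraction of that noise plus an independent fresh $\mathcal{N}(0,\sigma_m^2 - r^2\sigma_j^2)$, then $\widetilde{Q}_m(D)-Q_m(D) \sim \mathcal{N}(0,\sigma_m^2)$ and the incremental contribution to the privacy-loss variance is additive because the fresh noise is independent of everything recorded so far. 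I would argue this by writing the joint density over $(y_1,\ldots,y_m)$ as the density over the prior outputs times the conditional density of $y_m$ given the reused component, taking logs, and observing the log-ratio splits into the prior privacy loss plus an independent Gaussian term. For the corollary this composition machinery collapses to a single term, so the only computation that actually gets used is the single-mechanism one above.

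The main obstacle — though it is conceptual rather than technical — is being careful that when we ``reuse $r$ fraction of noise,'' the fresh additive noise $\mathcal{N}(0,\sigma_m^2 - r^2\sigma_j^2)$ is independent of the $\sigma$-algebra generated by $Y_1(D),\ldots,Y_{m-1}(D)$, which is what makes the privacy-loss variances add rather than interact; once that independence is granted, everything is a routine Gaussian computation. For the corollary specifically there is essentially no obstacle: it is the $m=1$ (equivalently $r=0$, $A\equiv 0$) specialization of Result (i), and its only role is to recover the textbook fact that a single Gaussian mechanism with noise $\sigma$ has privacy loss $\mathcal{N}(\frac{A}{2},A)$ with $A = \|Q(D)-Q(D')\|_2^2/\sigma^2$.
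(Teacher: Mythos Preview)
Your proposal is correct and matches the paper's approach: the paper does not give a standalone proof of the corollary but simply presents it as the $m=1$ (equivalently $r=0$, $A\equiv 0$) specialization of Result~(i) of Theorem~\ref{thm-Alg1-explain-privacy-cost-with-r}, and remarks that this single-mechanism fact is already known from prior work on the Gaussian mechanism. Your additional direct computation of the log-ratio for a single Gaussian mechanism is exactly the verification the paper defers to the literature (and to the base case in the proof of Theorem~\ref{thm-Alg1-explain-privacy-cost-with-r}, which cites~\cite{balle2018improving}), so there is nothing missing.
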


Corollary~\ref{cor-Alg1-explain-privacy-cost-with1} has been shown in many prior studies~\cite{Dwork2014,dwork2016concentrated,bun2016concentrated} on the Gaussian mechanism for differential privacy.

By considering $r=0$ in Result (i) of Theorem~\ref{thm-Alg1-explain-privacy-cost-with-r}, we obtain Corollary~\ref{cor-Alg1-explain-privacy-cost-with2}, which presents the privacy loss of the naive algorithm where the noisy response to each query is generated independently using fresh noise.

\begin{cor}[\textbf{Privacy loss of the naive algorithm where each query is answered independently}] \label{cor-Alg1-explain-privacy-cost-with2}
 Suppose a dataset has been used to answer $n$ queries $Q_1, Q_2, \ldots, Q_n$ under differential privacy. Specifically, for $i=1,2,\ldots,n$, to answer the $i$-th query $Q_i$ under $(\epsilon_i, \delta_i)$-differential privacy, a noisy response $\widetilde{Q}_i $ is generated by adding independent Gaussian noise $\sigma_i:=\Gaussian(\Delta_{Q_i}, \epsilon_i, \delta_i)$ to the true query result ${Q}_i $, where $\Delta_{Q_i}$ is the $\ell_2$-sensitivity of $Q_i$. 
Then after answering $n$ queries $Q_1, Q_2, \ldots, Q_n$ independently as above, the privacy loss with respect to neighboring datasets $D$ and $D'$  is given by $\mathcal{N}(\frac{F(D, D')}{2},F(D, D'))$ for $ F(D, D'):= \sum_{i=1}^n \frac{[\| Q_i(D) -Q_i(D') \|_2]^2}{{\sigma_i}^2}$.
\end{cor}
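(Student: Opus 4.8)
The plan is to prove the statement by induction on the number of answered queries $n$, using Result~(i) of Theorem~\ref{thm-Alg1-explain-privacy-cost-with-r} in the special case $r=0$. For the base case $n=1$, I would simply invoke Corollary~\ref{cor-Alg1-explain-privacy-cost-with1} (equivalently, $m=1$ in Result~(i) of Theorem~\ref{thm-Alg1-explain-privacy-cost-with-r}): a single run of the Gaussian mechanism adding noise amount $\sigma_1$ to $Q_1$ has privacy loss $\mathcal{N}(\frac{A_1(D,D')}{2},A_1(D,D'))$ with $A_1(D,D')=\frac{[\|Q_1(D)-Q_1(D')\|_2]^2}{{\sigma_1}^2}$, which is exactly $\mathcal{N}(\frac{F(D,D')}{2},F(D,D'))$ when $n=1$.

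For the inductive step, assume that after answering $Q_1,\ldots,Q_{n-1}$ independently the privacy loss $L_{\widetilde{Q}_1 \Vert \cdots \Vert \widetilde{Q}_{n-1}}(D,D')$ is $\mathcal{N}(\frac{A(D,D')}{2},A(D,D'))$ with $A(D,D')=\sum_{i=1}^{n-1}\frac{[\|Q_i(D)-Q_i(D')\|_2]^2}{{\sigma_i}^2}$. Since $\widetilde{Q}_n$ is obtained by adding a fresh independent Gaussian noise of variance ${\sigma_n}^2$ to $Q_n(D)$, this is precisely the setting of Theorem~\ref{thm-Alg1-explain-privacy-cost-with-r} with reuse fraction $r=0$: in that limit Eq.~(\ref{reuse-expr2}) degenerates to $\widetilde{Q}_n(D)\leftarrow Q_n(D)+\mathcal{N}(0,{\sigma_n}^2)$, and the hypothesis that $Q_n$ coincides with an earlier query is immaterial because the $r[\widetilde{Q}_j(D)-Q_j(D)]$ term vanishes. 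Applying Result~(i) with $r=0$ yields that $L_{\widetilde{Q}_1 \Vert \cdots \Vert \widetilde{Q}_n}(D,D')$ is $\mathcal{N}(\frac{B_0(D,D')}{2},B_0(D,D'))$ with $B_0(D,D')=A(D,D')+\frac{[\|Q_n(D)-Q_n(D')\|_2]^2}{{\sigma_n}^2}=\sum_{i=1}^{n}\frac{[\|Q_i(D)-Q_i(D')\|_2]^2}{{\sigma_i}^2}=F(D,D')$, which closes the induction.

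It is worth recording the mechanism that makes the $r=0$ case of Theorem~\ref{thm-Alg1-explain-privacy-cost-with-r} go through here, since one could also argue directly: because the $n$ injected noises are mutually independent and each $\widetilde{Q}_i$ depends only on $Q_i(D)$ and its own noise, the joint density of $(\widetilde{Q}_1(D),\ldots,\widetilde{Q}_n(D))$ factorizes, so the composed privacy loss satisfies $L_{\widetilde{Q}_1 \Vert \cdots \Vert \widetilde{Q}_n}(D,D';y_1,\ldots,y_n)=\sum_{i=1}^n L_{\widetilde{Q}_i}(D,D';y_i)$. By Corollary~\ref{cor-Alg1-explain-privacy-cost-with1}, each summand—viewed as a random variable with $y_i$ drawn from $\widetilde{Q}_i(D)$—is distributed as $\mathcal{N}(\frac{A_i(D,D')}{2},A_i(D,D'))$ with $A_i(D,D')=\frac{[\|Q_i(D)-Q_i(D')\|_2]^2}{{\sigma_i}^2}$, and these summands are independent; hence their sum is $\mathcal{N}(\frac{F(D,D')}{2},F(D,D'))$ by the closure of Gaussians under convolution.

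I do not expect a genuine obstacle: the statement is essentially a bookkeeping corollary of Theorem~\ref{thm-Alg1-explain-privacy-cost-with-r}. The only point needing a sentence of care is the independence of the per-query privacy-loss random variables, which reduces to the independence of the injected noises together with the per-query structure of the mechanisms; the multidimensional case is routine, since coordinate-wise independent Gaussian noise simply makes the per-coordinate contributions accumulate inside $\|\cdot\|_2^2$.
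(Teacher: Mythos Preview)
Your proposal is correct and follows essentially the same approach as the paper: the paper derives Corollary~\ref{cor-Alg1-explain-privacy-cost-with2} simply ``by considering $r=0$ in Result~(i) of Theorem~\ref{thm-Alg1-explain-privacy-cost-with-r},'' leaving the induction implicit, whereas you spell out the base case via Corollary~\ref{cor-Alg1-explain-privacy-cost-with1} and the inductive step explicitly. Your remark that the hypothesis ``$Q_m$ is the same as some earlier $Q_j$'' in Theorem~\ref{thm-Alg1-explain-privacy-cost-with-r} becomes immaterial at $r=0$ is a useful clarification that the paper glosses over.
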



\subsection{Explaining Privacy Cost Update in Algorithm~\ref{algmain}}

Among the above cases, Cases~2A) and 2C) do not incur additional privacy cost since they just use previous noisy results and generate fresh Gaussian noise, without access to the dataset $D$. In contrast,  Cases~1) and 2B) incur additional privacy cost since they need to access the dataset $D$ to compute the true query result $Q_m(D)$. Hence, in Algorithm~\ref{algmain}, the privacy cost is updated in Cases~1) and 2B), but not in Cases~2A) and 2C). In this section, we explain the reason that the privacy cost is updated in Algorithm~\ref{algmain} according to Lines~\ref{case1update1} and~\ref{case1update2} for Case~1), and Lines~\ref{case2bupdate1} and~\ref{case2bupdate2} for Case~2B).


When our Algorithm~\ref{algmain} is used, we let the above randomized mechanism $Y_i$ be our noisy response function
$\widetilde{Q}_i$. When $\widetilde{Q}_1, \widetilde{Q}_2, \ldots, \widetilde{Q}_{i-1}$ on dataset $D$ are instantiated as $y_1, y_2, \ldots, y_{i-1}$, if the generation of $\widetilde{Q}_i$ on dataset $D$ uses $\widetilde{Q}_j$ for some $j<i$, then the auxiliary information $\textup{aux}_i$ in the input to $\widetilde{Q}_i$ contains $y_j$ ($\textup{aux}_1$ is $\emptyset$). For the consecutive use of our Algorithm~\ref{algmain}, it will become clear that the privacy loss, defined by $L_{\widetilde{Q}_1, \widetilde{Q}_2, \ldots, \widetilde{Q}_m}(y_1, y_2, \ldots, y_m)  := \ln \max_{\textup{neighboring datasets $D,D'$}} \frac{ \fr{\cap_{i=1}^m \left[\widetilde{Q}_i(D) = y_i\right]} }{ \fr{\cap_{i=1}^m \left[\widetilde{Q}_i(D') = y_i \right]}}$, follows a Gaussian probability distribution with mean $\frac{V}{2}$ and variance $V$ for some $V$, denoted by $\mathcal{N}(\frac{V}{2},V)$. For such a reason that  form of privacy loss, the corresponding differential-privacy level is given by the following lemma.

\begin{lem} \label{lem-privacy-cost-DP-condition}
If the privacy loss of a randomized mechanism $Y$ with respect to neighboring datasets $D$ and $D'$ is given by $\mathcal{N}(\frac{V(D, D')}{2},V(D, D'))$ for some $V(D, D')$, then $Y$ achieves $(\epsilon, \delta)$-differential privacy for $\epsilon$ and $\delta$ satisfying   $\max_{\textup{neighboring datasets $D,D'$}} V(D, D') =  [\Gaussian(1, \epsilon, \delta)]^{-2} $.
\end{lem}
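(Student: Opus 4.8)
The plan is to reduce $(\epsilon,\delta)$-differential privacy to a one-sided tail bound on the privacy-loss random variable, and then to evaluate that tail for a Gaussian, so that the whole argument collapses onto the classical analysis of the Gaussian mechanism already recorded in Lemma~\ref{lemma-Gaussian} and Corollary~\ref{cor-Alg1-explain-privacy-cost-with1}.

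First I would establish the generic implication ``tail bound on the privacy loss $\Rightarrow$ $(\epsilon,\delta)$-DP.'' Fix neighboring $D,D'$ and a measurable output set $\mathcal{Y}$, and split $\mathcal{Y}$ according to whether $L_Y(D,D';y)\le\epsilon$ or $L_Y(D,D';y)>\epsilon$. On the first part the definition~(\ref{eqn-L-Y-D-Dprime}) of privacy loss gives $\fr{Y(D)=y}\le e^{\epsilon}\fr{Y(D')=y}$, so integrating over $\mathcal{Y}$ yields $\pr{Y(D)\in\mathcal{Y}}\le e^{\epsilon}\pr{Y(D')\in\mathcal{Y}}+\pr{L_Y(D,D')>\epsilon}$; hence~(\ref{def-DP-eq}) for all $\mathcal{Y}$ follows once $\pr{L_Y(D,D')>\epsilon}\le\delta$ for every pair of neighboring datasets. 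Next I would reduce to the worst neighbors: since $L_Y(D,D')\sim\mathcal{N}(\tfrac{V(D,D')}{2},V(D,D'))$, we have, for $V(D,D')>0$, $\pr{L_Y(D,D')>\epsilon}=\pr{\mathcal{N}(0,1)>\tfrac{\epsilon}{\sqrt{V(D,D')}}-\tfrac{\sqrt{V(D,D')}}{2}}$ (and the probability is $0$ when $V(D,D')=0$), and because $v\mapsto\tfrac{\epsilon}{\sqrt v}-\tfrac{\sqrt v}{2}$ is strictly decreasing on $(0,\infty)$, this tail probability is nondecreasing in $v$, so it suffices to verify the bound at $v=V_{\max}:=\max_{\text{neighboring }D,D'}V(D,D')$.

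It then remains to prove $\pr{\mathcal{N}(0,1)>\tfrac{\epsilon}{\sqrt{V_{\max}}}-\tfrac{\sqrt{V_{\max}}}{2}}\le\delta$ under the hypothesis $V_{\max}=[\Gaussian(1,\epsilon,\delta)]^{-2}$. Writing $\sigma:=\Gaussian(1,\epsilon,\delta)=\sqrt{2\ln\tfrac{1.25}{\delta}}\,/\,\epsilon$, the hypothesis says exactly $V_{\max}=1/\sigma^{2}$, and by Corollary~\ref{cor-Alg1-explain-privacy-cost-with1} the privacy loss at $v=V_{\max}$ is the same distribution $\mathcal{N}(\tfrac{1}{2\sigma^{2}},\tfrac{1}{\sigma^{2}})$ that arises from a Gaussian mechanism answering a query of $\ell_2$-sensitivity $1$ with noise standard deviation $\sigma$. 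Hence the quantity to bound equals $\pr{\mathcal{N}(0,1)>\epsilon\sigma-\tfrac{1}{2\sigma}}$, which is precisely the tail shown to be at most $\delta$ in Dwork and Roth's proof of Theorem~A.1 (our Lemma~\ref{lemma-Gaussian}) for this value of $\sigma$; I would either cite that argument or reproduce it via the elementary estimate $\pr{\mathcal{N}(0,1)>t}\le\tfrac{1}{t\sqrt{2\pi}}e^{-t^{2}/2}$ applied with $t=\epsilon\sigma-\tfrac{1}{2\sigma}>0$. Combining this with the two reductions above completes the proof.

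The hard part is this last Gaussian tail estimate: it is the only non-routine ingredient, and it is exactly the place where the classical analysis needs a mild restriction (typically $\epsilon\le1$), so the statement is to be read over whatever range of $(\epsilon,\delta)$ makes Lemma~\ref{lemma-Gaussian} applicable; the worst-case-output-set decomposition and the monotonicity in $v$ are short and standard.
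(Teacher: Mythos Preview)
Your argument is correct and reaches the same endpoint as the paper—namely, reducing to the classical Gaussian mechanism with sensitivity $1$ and noise level $\sigma=\Gaussian(1,\epsilon,\delta)$ and then invoking Lemma~\ref{lemma-Gaussian}—but the route you take to that reduction is genuinely different and more elementary. The paper's proof is a black-box comparison: it constructs the auxiliary mechanism $\widetilde{R}$ that adds Gaussian noise of standard deviation $\mu=1/\sqrt{V_{\max}}$ to a sensitivity-$1$ query, observes via Corollary~\ref{cor-Alg1-explain-privacy-cost-with1} that $\widetilde{R}$ has the same worst-case privacy-loss distribution as $Y$, and then cites Theorem~5 of Balle and Wang~\cite{balle2018improving} to conclude that $(\epsilon,\delta)$-DP for $Y$ is equivalent to $(\epsilon,\delta)$-DP for $\widetilde{R}$, at which point Lemma~\ref{lemma-Gaussian} finishes the job. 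Your proof avoids the external reference entirely: you prove directly that a tail bound $\pr{L_Y(D,D')>\epsilon}\le\delta$ implies $(\epsilon,\delta)$-DP via the standard output-set split, then use the explicit monotonicity of $v\mapsto\epsilon/\sqrt{v}-\sqrt{v}/2$ to reduce to $v=V_{\max}$, and only then identify the resulting tail with the Dwork--Roth computation. What the paper's approach buys is brevity and modularity (the Balle--Wang result is stated once and reused); what yours buys is self-containment and transparency about exactly where the parameter restriction (e.g.\ $\epsilon\le 1$) from Lemma~\ref{lemma-Gaussian} enters.
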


\begin{proof}
The proof details are in Appendix~\ref{sec-proof-lemma-2}.
\end{proof}


Based on the privacy loss defined above, we have the following theorem which explains the rules to update the privacy cost in our Algorithm~\ref{algmain}.

\begin{thm} \label{Alg1-explain-privacy-cost}
We consider the consecutive use of Algorithm~\ref{algmain} here. Suppose that after answering $Q_1, Q_2, \ldots, Q_{m-1}$ and before answering query $Q_m$, the privacy loss with respect to neighboring datasets $D$ and $D'$ is given by $\mathcal{N}(\frac{A(D, D')}{2}, A(D, D'))$ for some $A(D, D')$, and the corresponding privacy level can be given by $(\epsilon_{\textup{old}}, \delta_{\textup{budget}})$-differential privacy. Then in Algorithm~\ref{algmain}, after answering all $m$ queries $Q_1, Q_2, \ldots, Q_{m-1}, Q_m$, we have: 
\begin{itemize}
\item
the privacy loss with respect to neighboring datasets $D$ and $D'$
\begin{itemize}
\item[\ding{172}] will still be $\mathcal{N}(\frac{A(D, D')}{2},A(D, D'))$ in Cases~2A) and 2C), 
\item[\ding{173}] will be  $\mathcal{N}(\frac{B(D, D')}{2},B(D, D'))$ in Case~1) for $B(D, D'):=A(D, D') + \frac{[\| Q_m(D) -Q_m(D') \|_2]^2}{{\sigma_m}^2} $,
\item[\ding{174}] will be $\mathcal{N}(\frac{C(D, D')}{2},C(D, D'))$ in Case~2B) \vspace{1pt} for {$C(D, D'):=A(D, D') + [\| Q_m(D) -Q_m(D') \|_2]^2 \times \left[ \frac{1}{{\sigma_m}^2} - \frac{1}{[\min(\boldsymbol{\Sigma}_t)]^2}\right]$;} 
\end{itemize}
\item
the corresponding privacy level can be given by $(\epsilon_{\textup{new}}, \delta_{\textup{budget}})$-differential privacy with the following $\epsilon_{\textup{new}}$:
\begin{itemize}
\item[\ding{175}] $\epsilon_{\textup{new}} = \epsilon_{\textup{old}}$ in Cases~2A) and 2C), 
\item[\ding{176}] ${\epsilon_{\textup{new}}}^2 = {\epsilon_{\textup{old}}}^2 + \epsilon\textup{\_squared\_cost}$ in Case~1) for $\epsilon\textup{\_squared\_cost}$ satisfying $ \Gaussian(\Delta_{Q_m}, \sqrt{\epsilon\textup{\_squared\_cost}}, \delta_{\textup{budget}}) = \sigma_m  $,
\item[\ding{177}] ${\epsilon_{\textup{new}}}^2 = {\epsilon_{\textup{old}}}^2 + \epsilon\textup{\_squared\_cost}$ in Case~2B) for $\epsilon\textup{\_squared\_cost}$ satisfying $[ \Gaussian(\Delta_{Q_m}, \sqrt{\epsilon\textup{\_squared\_cost}}, \delta_{\textup{budget}}) ]^{-2} = {\sigma_m}^{-2} - [\min(\boldsymbol{\Sigma}_t)]^{-2}$. 
\end{itemize}
\end{itemize}
\end{thm}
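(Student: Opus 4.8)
\textbf{Proof proposal for Theorem~\ref{Alg1-explain-privacy-cost}.}

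The plan is to reduce everything to two ingredients that are already available: Theorem~\ref{thm-Alg1-explain-privacy-cost-with-r} for tracking how the privacy loss random variable evolves, and Lemma~\ref{lem-privacy-cost-DP-condition} for converting a $\mathcal{N}(\frac{V}{2},V)$-form privacy loss into an $(\epsilon,\delta)$-guarantee. First I would handle the three bullets \ding{172}--\ding{174} on the privacy loss itself. For Case~1) I would invoke Corollary~\ref{cor-Alg1-explain-privacy-cost-with1} together with the additivity built into Theorem~\ref{thm-Alg1-explain-privacy-cost-with-r}: since $Q_m$ is fresh, the new noisy answer $\widetilde{Q}_m$ uses independent Gaussian noise, so the composed privacy loss is the sum of the old one and a fresh $\mathcal{N}(\frac{\|Q_m(D)-Q_m(D')\|_2^2/\sigma_m^2}{2}, \cdot)$ term; independence of the noise makes the two Gaussians add, giving $\mathcal{N}(\frac{B(D,D')}{2},B(D,D'))$ with $B = A + \|Q_m(D)-Q_m(D')\|_2^2/\sigma_m^2$. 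For Case~2B) I would apply Result~(i) of Theorem~\ref{thm-Alg1-explain-privacy-cost-with-r} with $j$ chosen so that $\sigma_j=\min(\boldsymbol{\Sigma}_t)$ and with $r=r_{\textup{optimal}}=(\sigma_m/\sigma_j)^2$; plugging this $r$ into $B_r(D,D')$ and simplifying (this is exactly Eq.~(\ref{Brresult})) yields $C(D,D') = A(D,D') + \|Q_m(D)-Q_m(D')\|_2^2\,(\frac{1}{\sigma_m^2}-\frac{1}{[\min(\boldsymbol{\Sigma}_t)]^2})$. For Cases~2A) and 2C) I would argue that no access to $D$ occurs: in 2A) we literally output $\widetilde{Q}_j(D)$, so the joint distribution of all released answers is unchanged and hence the privacy loss is unchanged; in 2C) we output $\widetilde{Q}_\ell(D)+\mathcal{N}(0,\sigma_m^2-\sigma_\ell^2)$, where the fresh noise is independent of $D$ given $\widetilde{Q}_\ell(D)$, so conditioning shows the added coordinate contributes nothing to the likelihood ratio between $D$ and $D'$ — formally this is $r=1$ in Theorem~\ref{thm-Alg1-explain-privacy-cost-with-r}, for which $B_1(D,D')=A(D,D')$. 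This establishes \ding{172} and \ding{173}, \ding{174}.

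Next I would derive the $(\epsilon,\delta)$ statements \ding{175}--\ding{177} from the privacy-loss statements. Taking the max over neighboring $D,D'$ and applying Lemma~\ref{lem-privacy-cost-DP-condition}: the old level satisfies $\max_{D,D'} A(D,D') = [\Gaussian(1,\epsilon_{\textup{old}},\delta_{\textup{budget}})]^{-2}$. By the explicit form $\Gaussian(1,\epsilon,\delta) = \sqrt{2\ln(1.25/\delta)}/\epsilon$ from Lemma~\ref{lemma-Gaussian}, we have $[\Gaussian(1,\epsilon,\delta_{\textup{budget}})]^{-2} = \epsilon^2 / (2\ln(1.25/\delta_{\textup{budget}}))$, so after fixing $\delta=\delta_{\textup{budget}}$ the quantity $[\Gaussian(1,\epsilon,\delta_{\textup{budget}})]^{-2}$ is just a fixed positive multiple of $\epsilon^2$. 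Hence for \ding{175}, $\max_{D,D'}A$ unchanged forces $\epsilon_{\textup{new}}=\epsilon_{\textup{old}}$. For \ding{176}, $\max_{D,D'} B = \max_{D,D'} A + \max_{D,D'}\|Q_m(D)-Q_m(D')\|_2^2/\sigma_m^2$; the second term is $\Delta_{Q_m}^2/\sigma_m^2 = [\Gaussian(\Delta_{Q_m},\sqrt{\epsilon\textup{\_squared\_cost}},\delta_{\textup{budget}})]^{-2}$ by the very definition of $\epsilon\textup{\_squared\_cost}$ (Line~\ref{case1update1}). Dividing through by the common factor $2\ln(1.25/\delta_{\textup{budget}})$ (equivalently, using that $\Delta_{Q_m}^2/[\Gaussian(\Delta_{Q_m},\sqrt{x},\delta_{\textup{budget}})]^{-2}=x$) converts the additive relation on the $V$'s into $\epsilon_{\textup{new}}^2=\epsilon_{\textup{old}}^2+\epsilon\textup{\_squared\_cost}$. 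For \ding{177} the same bookkeeping applies, except the extra term is $\|Q_m(D)-Q_m(D')\|_2^2(\frac{1}{\sigma_m^2}-\frac{1}{[\min(\boldsymbol{\Sigma}_t)]^2})$, whose max over $D,D'$ is $\Delta_{Q_m}^2(\sigma_m^{-2}-[\min(\boldsymbol{\Sigma}_t)]^{-2})$, matching the defining equation for $\epsilon\textup{\_squared\_cost}$ in Line~\ref{case2bupdate1}; the same division gives $\epsilon_{\textup{new}}^2=\epsilon_{\textup{old}}^2+\epsilon\textup{\_squared\_cost}$.

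One technical point I would want to be careful about is that the max over neighboring datasets distributes across the sum, i.e. $\max_{D,D'}[A(D,D') + g(D,D')] $ should be handled so that it equals $\max_{D,D'}A(D,D') + \max_{D,D'} g(D,D')$. This is where the argument is slightly delicate, because in general the max of a sum is at most the sum of maxes, not equal to it. The resolution is that $A(D,D')$ is itself (by the inductive hypothesis and Corollary~\ref{cor-Alg1-explain-privacy-cost-with2}) a sum of terms of the form $\|Q_i(D)-Q_i(D')\|_2^2/\sigma_i^2$, and the worst-case neighboring pair is the one realizing each $\ell_2$-sensitivity $\Delta_{Q_i}$ simultaneously — which is legitimate because sensitivity is defined as a maximum over the \emph{same} family of neighboring pairs, and the pair achieving $\Delta_{Q_m}$ can be taken to be the global worst case. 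I expect this worst-case-alignment step, rather than any of the distributional computations, to be the main obstacle, and I would state it carefully (possibly as a small lemma, or by simply carrying the bound $\max_{D,D'}V(D,D')$ symbolically and only invoking the sensitivity definition at the very end when matching against $\Gaussian(\Delta_{Q_m},\cdot,\cdot)$). The remaining steps are then routine substitutions into the closed-form expression for $\Gaussian$.
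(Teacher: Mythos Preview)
Your proposal is correct and follows essentially the same approach as the paper: invoke Theorem~\ref{thm-Alg1-explain-privacy-cost-with-r} (via Eq.~(\ref{Brresult}), with the degenerate cases $r=1$ and $r=0$ handling 2A/2C and Case~1 respectively) for \ding{172}--\ding{174}, then apply Lemma~\ref{lem-privacy-cost-DP-condition} and the explicit form of $\Gaussian(\cdot,\cdot,\cdot)$ for \ding{175}--\ding{177}. Your flagging of the $\max_{D,D'}$-distributivity issue is in fact more careful than the paper's own proof, which simply subtracts the two max equations without comment; your suggested remedy of carrying the sensitivity bound symbolically is the right way to make that step rigorous.
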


Theorem~\ref{Alg1-explain-privacy-cost} explains the rules to update the privacy cost in Algorithm~\ref{algmain}. Specifically, Result~\ding{176} gives Lines~\ref{case1update1} and~\ref{case1update2} for Case~1), and Result~\ding{177} gives Lines~\ref{case2bupdate1} and~\ref{case2bupdate2} for Case~2B).

\begin{proof}
The proof is in Appendix~\ref{sec-proof-theorem-2}.
\end{proof}

\subsection{Analyzing the Total Privacy Cost}

Based on Theorem~\ref{Alg1-explain-privacy-cost}, we now analyze the total privacy cost when our system calls Algorithm~\ref{algmain} consecutively.

At the beginning when no query has been answered, we have $V=0$ (note that $\mathcal{N}(0,0)\equiv 0$). Then by induction via Corollary~\ref{cor-Alg1-explain-privacy-cost-with1} and Theorem~\ref{Alg1-explain-privacy-cost}, for the consecutive use of Algorithm~\ref{algmain},
the privacy loss is always in the form of $\mathcal{N}(\frac{V}{2},V)$ for some $V$.
 In our Algorithm~\ref{algmain}, the privacy loss changes only when the query being answered belongs to Cases~1) and~2B). More formally, we have the following theorem.

\begin{thm} \label{Alg1-total-privacy-cost}
Among queries $Q_1, Q_2, \cdots, Q_n$, let $N_1$,  $N_{2A}$, $N_{2B}$, and $N_{2C}$ be the set of $i \in \{1,2,\ldots,n\}$ such that $Q_i$ is in Cases~1),~2A),~2B), and~2C), respectively. For queries in Case~2B), let $T_{2B}$ be the set of query types. In Case~2B), for query type $t\in T_{2B}$, suppose the number of type-$t$ queries be $m_t$, and let these type-$t$ queries be $Q_{j_{t,1}}, Q_{j_{t,2}}, \ldots, Q_{j_{t,m_t}}$ for indices $j_{t,1},j_{t,2},\ldots ,j_{t,m_t} $ (ordered  chronologically)  all belonging to $N_{2B}$. From Case~2B) of Algorithm~\ref{algmain}, we have $\sigma_{j_{t,1}} > \sigma_{j_{t,2}} > \ldots > \sigma_{j_{t,m_t}}$,  and for $k\in \{2, 3,\ldots, m_t\}$, $\widetilde{Q}_{j_{t,k}}$ is answered by reusing $\frac{{\sigma_{j_{t,k}}}^2}{{\sigma_{j_{t,k-1}}}^2}$ fraction of old noise in $\widetilde{Q}_{j_{t,k-1}}$; more specifically, $\widetilde{Q}_{j_{t,k}} = {Q}_{j_{t,k}} + \frac{{\sigma_{j_{t,k}}}^2}{{\sigma_{j_{t,k-1}}}^2}[\widetilde{Q}_{j_{t,k-1}} - {Q}_{j_{t,k-1}}] + \mathcal{N}(0, {\sigma_{j_{t,k}}}^2-\frac{{\sigma_{j_{t,k}}}^4}{{\sigma_{j_{t,k-1}}}^2})$ from Line~\ref{codeforcase2b} of Algorithm~\ref{algmain} in Section~\ref{section:algo-explain} for Case~2B). We also consider that $\widetilde{Q}_{j_{t,1}}$ is answered by reusing $\frac{{\sigma_{j_{t,1}}}^2}{{\sigma_{j_{t,0}}}^2}$ fraction of old noise in $\widetilde{Q}_{j_{t,0}}$. Let the $\ell_2$-sensitivity of a type-$t$ query be $\Delta(\textup{type-}t)$.

In the example provided in Table~\ref{tableexample-algmain}, we have $N_1 = \{1,2,3\} $,  $N_{2A}= \{7\} $, $N_{2B}= \{5,6,9,10,11,13\} $, and $N_{2C}= \{8,12\} $. $T_{2B} = \{\textup{type-}1,\textup{type-}2,\textup{type-}3\}$. In Case~2B),  the number of type-$1$ queries is $m_1 = 2$, and these type-$1$ queries are $Q_6$ and $Q_{10}$ so $j_{1,1} = 6$ and $j_{1,2} = 10$ (also $j_{1,0} = 1$ since $\widetilde{Q}_6 $ reuses $\widetilde{Q}_1$); the number of type-$2$ queries is $m_2 = 3$, and these type-$2$ queries are $Q_5, Q_9$, and $Q_{11}$ so $j_{2,1} = 5$ and $j_{2,2} = 9$, $j_{2,3} = 11$ (also $j_{2,0} = 2$ since $\widetilde{Q}_5 $ reuses $\widetilde{Q}_2$); the number of type-$3$ queries is $m_3 = 1$, and this type-$3$ query is $Q_{13}$ so $j_{3,1} = 13$ (also $j_{3,0} = 3$ since $\widetilde{Q}_{13} $ reuses $\widetilde{Q}_3$).

Then after Algorithm~\ref{algmain} is used to answer all $n$ queries with query $Q_i$ being answered under $(\epsilon_i, \delta_i)$-differential privacy,  we have:
\begin{itemize}
\item The total privacy loss with respect to neighboring datasets $D$ and $D'$ is given by $\mathcal{N}(\frac{G(D, D')}{2},G(D, D'))$, where
\begin{align} 
& G(D, D'):= \sum_{i \in N_1} \frac{[\| Q_i(D) -Q_i(D') \|_2]^2}{{\sigma_i}^2} \nonumber
\\ & \quad + \sum_{t\in T_{2B}} \Bigg\{\frac{{[\| Q_{j_{t,m_t}}(D) -Q_{j_{t,m_t}}(D') \|_2]}^2}{{\sigma_{j_{t,m_t}}}^2} \nonumber
\\ & \quad \quad \quad  \quad \quad - \frac{{[\| Q_{j_{t,0}}(D) -Q_{j_{t,0}}(D') \|_2]}^2}{{\sigma_{j_{t,0}}}^2}\Bigg\} , \label{label-eq-G}  
\end{align} 
 and the first summation is the contribution from queries in Case~1), and the second summation is the contribution from queries in Case~2B). When $D$ and $D'$ iterate the space of neighboring datasets, the maximum of $\| Q_i(D) -Q_i(D') \|$ is $Q_i$'s $\ell_2$-sensitivity $\Delta_{Q_i}$, and the maximum of both $\| Q_{j_{t,m_t}}(D) -Q_{j_{t,m_t}}(D') \|_2$ and $\| Q_{j_{t,0}}(D) -Q_{j_{t,0}}(D') \|_2$ are $\Delta(\textup{type-}t)$ since $Q_{j_{t,m_t}}$ and $Q_{j_{t,0}}$ are both type-$t$ queries, we obtain
 \begin{align} 
 & \max_{\textup{neighboring datasets $D,D'$}} G(D, D') \nonumber
\\ & = \sum_{i \in N_1} \frac{{\Delta_{Q_i}}^2}{{\sigma_i}^2} + \sum_{t\in T_{2B}} \left[\frac{{[\Delta(\textup{type-}t)]}^2}{{\sigma_{j_{t,m_t}}}^2} - \frac{{[\Delta(\textup{type-}t)]}^2}{{\sigma_{j_{t,0}}}^2}\right]. \label{label-eq-G-maxDDprime}  
\end{align} 
 
In the example provided in Table~\ref{tableexample-algmain} in Section~\ref{tableexample-algmain}, $\max_{\textup{neighboring datasets $D,D'$}} G(D, D')$ is given by $\frac{{\Delta_{Q_1}}^2}{{\sigma_1}^2} + \frac{{\Delta_{Q_2}}^2}{{\sigma_2}^2} + \frac{{\Delta_{Q_3}}^2}{{\sigma_3}^2} + \left[\frac{{[\Delta(\textup{type-}1)]}^2}{{\sigma_{10}}^2} - \frac{{[\Delta(\textup{type-}1)]}^2}{{\sigma_{1}}^2}\right]  + \left[\frac{{[\Delta(\textup{type-}2)]}^2}{{\sigma_{11}}^2} - \frac{{[\Delta(\textup{type-}2)]}^2}{{\sigma_{2}}^2}\right]  + \left[\frac{{[\Delta(\textup{type-}3)]}^2}{{\sigma_{13}}^2} - \frac{{[\Delta(\textup{type-}3)]}^2}{{\sigma_{3}}^2}\right] = \frac{{[\Delta(\textup{type-}1)]}^2}{{\sigma_{10}}^2}  + \frac{{[\Delta(\textup{type-}2)]}^2}{{\sigma_{11}}^2}  + \frac{{[\Delta(\textup{type-}3)]}^2}{{\sigma_{13}}^2}.  $
\item From Lemma~\ref{lem-privacy-cost-DP-condition}, the total privacy cost of our Algorithm~\ref{algmain} can be given by $(\epsilon_{\textup{ours}}, \delta_{\textup{budget}})$-differential privacy for $\epsilon_{\textup{ours}}$  satisfying
\begin{align} 
\hspace{-12pt}[\Gaussian(1, \epsilon_{\textup{ours}}, \delta_{\textup{budget}})]^{-2}= \max_{\textup{neighboring datasets $D,D'$}} G(D, D') , \label{label-eq-G-maxDDprime-privcost}  
\end{align} 
or  $(\epsilon, \delta)$-differential privacy for any $\epsilon$ and $\delta$ satisfying  $[\Gaussian(1, \epsilon, \delta)]^{-2}= \max_{\textup{neighboring datasets $D,D'$}} G(D, D') $. 
\end{itemize}
 
\end{thm}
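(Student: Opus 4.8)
The plan is an induction on the number of queries already answered, with the one-step update supplied by Theorem~\ref{Alg1-explain-privacy-cost} and the base case by $\mathcal{N}(0,0)\equiv 0$ (Corollary~\ref{cor-Alg1-explain-privacy-cost-with1} covering the first query, which is always in Case~1)). Before any query is answered the privacy loss is the degenerate $\mathcal{N}(0,0)$, i.e., of the form $\mathcal{N}(\frac{V}{2},V)$ with $V=0$. Theorem~\ref{Alg1-explain-privacy-cost} then guarantees that, after each further query, the privacy loss with respect to any fixed neighboring pair $D,D'$ stays of the form $\mathcal{N}(\frac{V}{2},V)$, where $V$ is left unchanged when the query is in Case~2A) or Case~2C), is increased by $\frac{[\|Q_m(D)-Q_m(D')\|_2]^2}{{\sigma_m}^2}$ when it is in Case~1), and is increased by $[\|Q_m(D)-Q_m(D')\|_2]^2\big(\frac{1}{{\sigma_m}^2}-\frac{1}{[\min(\boldsymbol{\Sigma}_t)]^2}\big)$ when it is in Case~2B). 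Summing these increments over $Q_1,\dots,Q_n$ already writes the final variance $G(D,D')$ as the Case~1) part $\sum_{i\in N_1}\frac{[\|Q_i(D)-Q_i(D')\|_2]^2}{{\sigma_i}^2}$ plus the aggregate of the Case~2B) increments; it remains to simplify the latter and then to maximize over $D,D'$.

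The core step is to show that the Case~2B) increments telescope within each query type. I would first isolate a bookkeeping fact about Algorithm~\ref{algmain}: restricted to a fixed type $t$, the noise amounts triggering Case~2B) are exactly $\sigma_{j_{t,1}}>\sigma_{j_{t,2}}>\dots>\sigma_{j_{t,m_t}}$ --- strictly decreasing because the Case~2B) test is $\sigma_m<\min(\boldsymbol{\Sigma}_t)$ and Case~2B) records its $\sigma_m$, so every Case~2B) instance becomes the new strict minimum of $\boldsymbol{\Sigma}_t$ --- and at the moment $Q_{j_{t,k}}$ is processed one has $\min(\boldsymbol{\Sigma}_t)=\sigma_{j_{t,k-1}}$: for $k\ge 2$ because the intervening type-$t$ queries are all in Case~2A) or Case~2C) and hence only record noise amounts $\ge\min(\boldsymbol{\Sigma}_t)$, leaving the minimum undisturbed; and for $k=1$ by the definition of $\sigma_{j_{t,0}}$ as the value realizing $\min(\boldsymbol{\Sigma}_t)$ just before the first Case~2B) instance of type $t$ (supplied by the unique type-$t$ query that can be in Case~1), or by a later Case~2C) query of smaller noise). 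Since all type-$t$ queries are, by construction, the same query (that is how Algorithm~\ref{algmain} matches them for reuse), the values $Q_{j_{t,k}}(D)$ and $Q_{j_{t,k}}(D')$ do not depend on $k$; hence the type-$t$ Case~2B) increments sum to $[\|Q_{j_{t,m_t}}(D)-Q_{j_{t,m_t}}(D')\|_2]^2\sum_{k=1}^{m_t}\big(\frac{1}{{\sigma_{j_{t,k}}}^2}-\frac{1}{{\sigma_{j_{t,k-1}}}^2}\big)$, a telescoping sum equal to $\frac{[\|Q_{j_{t,m_t}}(D)-Q_{j_{t,m_t}}(D')\|_2]^2}{{\sigma_{j_{t,m_t}}}^2}-\frac{[\|Q_{j_{t,0}}(D)-Q_{j_{t,0}}(D')\|_2]^2}{{\sigma_{j_{t,0}}}^2}$. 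Summing over $t\in T_{2B}$ reproduces exactly~(\ref{label-eq-G}).

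To obtain~(\ref{label-eq-G-maxDDprime}) I would then maximize $G(D,D')$ over neighboring $D,D'$. Each Case~1) summand is increasing in $\|Q_i(D)-Q_i(D')\|_2$, whose maximum is $\Delta_{Q_i}$; for the type-$t$ term, the coefficient $\frac{1}{{\sigma_{j_{t,m_t}}}^2}-\frac{1}{{\sigma_{j_{t,0}}}^2}$ is \emph{positive} (since $\sigma_{j_{t,0}}>\sigma_{j_{t,1}}>\dots>\sigma_{j_{t,m_t}}$), so that summand too is increasing in $\|Q_{j_{t,m_t}}(D)-Q_{j_{t,m_t}}(D')\|_2=\|Q_{j_{t,0}}(D)-Q_{j_{t,0}}(D')\|_2$, whose maximum is $\Delta(\textup{type-}t)$ because $Q_{j_{t,m_t}}$ and $Q_{j_{t,0}}$ are type-$t$ queries. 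Bounding each summand accordingly gives~(\ref{label-eq-G-maxDDprime}) --- an upper bound on $\max_{D,D'}G(D,D')$, which is all that is needed since any valid upper bound on $\max_{D,D'}V(D,D')$ may be fed into Lemma~\ref{lem-privacy-cost-DP-condition}. Plugging this into Lemma~\ref{lem-privacy-cost-DP-condition} yields the $(\epsilon_{\textup{ours}},\delta_{\textup{budget}})$-differential-privacy guarantee of~(\ref{label-eq-G-maxDDprime-privcost}), and the same lemma gives the statement for any $(\epsilon,\delta)$ satisfying the same equation. As a consistency check I would re-derive the collapsed expression $\frac{[\Delta(\textup{type-}1)]^2}{{\sigma_{10}}^2}+\frac{[\Delta(\textup{type-}2)]^2}{{\sigma_{11}}^2}+\frac{[\Delta(\textup{type-}3)]^2}{{\sigma_{13}}^2}$ for the example of Table~\ref{tableexample-algmain}.

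The step I expect to be the main obstacle is the bookkeeping in the second paragraph: one must keep straight that $\boldsymbol{\Sigma}_t$ is the full multiset of recorded type-$t$ noise amounts --- so Cases~2A) and~2C) do insert entries (a duplicate, respectively a larger value) but never change its minimum --- and one must correctly identify $\sigma_{j_{t,0}}$ with whatever query realizes that minimum immediately before the first Case~2B) instance of type $t$; only with this in hand does each per-type sum collapse to the clean two-term form in~(\ref{label-eq-G}). Everything else reduces to invoking Theorem~\ref{Alg1-explain-privacy-cost}, Corollary~\ref{cor-Alg1-explain-privacy-cost-with1}, and Lemma~\ref{lem-privacy-cost-DP-condition} together with elementary algebra.
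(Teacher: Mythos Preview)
Your proposal is correct and follows essentially the same route as the paper's own proof: invoke Theorem~\ref{Alg1-explain-privacy-cost} inductively to accumulate the variance $V$, observe that only Cases~1) and~2B) contribute, telescope the Case~2B) increments within each query type using $\min(\boldsymbol{\Sigma}_t)=\sigma_{j_{t,k-1}}$, maximize over neighboring $D,D'$, and finish with Lemma~\ref{lem-privacy-cost-DP-condition}. Your bookkeeping for why the running minimum equals $\sigma_{j_{t,k-1}}$ is actually more explicit than the paper's (which simply asserts it in the theorem statement); the one small slip is the parenthetical ``or by a later Case~2C) query of smaller noise'' --- Case~2C) always has $\sigma_m>\min(\boldsymbol{\Sigma}_t)$ and so can never supply the minimum, but this does not affect your argument since you correctly conclude that Cases~2A) and~2C) leave $\min(\boldsymbol{\Sigma}_t)$ unchanged.
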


\begin{proof}
The proof is in Appendix~\ref{sec-proof-theorem-3}.
\end{proof}


\begin{rem} Theorem~\ref{Alg1-total-privacy-cost} can be used to understand that our Algorithm~\ref{algmain} incurs less privacy cost than that of the naive algorithm where $n$ queries are answered independently.
 As given in Corollary~\ref{cor-Alg1-explain-privacy-cost-with2}, the privacy loss with respect to neighboring datasets $D$ and $D'$ is given by $\mathcal{N}(\frac{F(D, D')}{2},F(D, D'))$ for $ F(D, D'):= \sum_{i=1}^n \frac{[\| Q_i(D) -Q_i(D') \|_2]^2}{{\sigma_i}^2}$. Clearly, $F(D, D')\geq G(D, D')$ for $G(D, D')$ given by Eq.~(\ref{label-eq-G}) above. From Lemma~\ref{lem-privacy-cost-DP-condition}, the privacy cost of the naive algorithm can be given by $(\epsilon_{\textup{naive}}, \delta_{\textup{budget}})$-differential privacy for $\epsilon_{\textup{naive}}$  satisfying $[\Gaussian(1, \epsilon_{\textup{naive}}, \delta_{\textup{budget}})]^{-2}= \max_{\textup{neighboring datasets $D,D'$}} F(D, D')$, which with Eq.~(\ref{label-eq-G-maxDDprime-privcost}) in Theorem~\ref{Alg1-total-privacy-cost} and the expression of $\Gaussian(\cdot, \cdot, \cdot)$ in Lemma~\ref{lemma-Gaussian} implies $\frac{{\epsilon_{\textup{ours}}}}{{\epsilon_{\textup{naive}}}}    = \sqrt{\frac{\max_{\textup{neighboring datasets $D,D'$}} G(D, D')}{\max_{\textup{neighboring datasets $D,D'$}} F(D, D')} }  \leq 1$ , where the equal sign is taken only when all $n$ queries are different so no noise reuse is incurred in our Algorithm~\ref{algmain}.
 \end{rem}

 

\subsection{Computing the $\ell_2$-sensitivity of A Query} \label{subsec-ell2-sensitivity}

The $\ell_2$-sensitivity of a query $Q$ is defined as the maximal $\ell_2$ distance between the (true) query results for any  neighboring datasets $D$ and $D'$ that differ in one record: \mbox{$\Delta_{Q}  = \max_{\textup{neighboring datasets $D,D'$}} \|Q(D) - Q(D')\|_{2}$}. For one-dimensional real-valued query $Q$, $\Delta_{Q}$ is simply the maximal absolute difference between $Q(D)$ and $Q(D')$ for any neighboring datasets $D$ and $D'$. In Section~\ref{sec-Experiments} for performance evaluation,   we define neighboring datasets by considering modifying an entry. Then if the dataset has $n$ users' information, and the domain of each user's income is within the interval $[\textup{min\_income}, \textup{max\_income}]$, then $\Delta_{Q}$ for query $Q$ being the average income of all users is $\frac{\textup{max\_income}-\textup{min\_income}}{n}$ since this is the maximal variation in the output when a user's record changes. Similarly, $\Delta_{Q}$ for query $Q$ being the percentage of female users is  $\frac{1}{n}$.

 

\section{Implementation Challenges of Our Blockchain-Based System} \label{sec-Challenges}

We now discuss  challenges and countermeasures during the design and implementation of our blockchain-based system.
 

\textit{Smart Contract fetches external data.}
Ethereum blockchain applications, such as Bitcoin scripts and smart contracts are unable to access and fetch directly the external data they need. However, in our application, \texttt{Blockchain} needs to fetch data from \texttt{Server} then returns them to \texttt{Client}. This requires smart contract to send the HTTP POST request. Hence, we use the Provable, a service integrated with a number of blockchain protocols and can be accessed by non-blockchain applications as well. It guarantees that data fetched from the original data-source is genuine and untampered.

By using the Provable, smart contracts can directly access data from web sites or APIs. In our case, \texttt{Blockchain} can send HTTP requests to \texttt{Server} with parameters, and then process and store data after \texttt{Server} responds successfully.

\textit{Mathematical operations with Solidity.}
\texttt{Blockchain} is written using solidity language which is designed to target Ethereum Virtual Machine. However, current solidity language does not have inherent functions for complex mathematical operations, such as taking the square root  or  logarithm. We  write a function to implement the square root operation. To avoid using Lemma~\ref{lemma-Gaussian} to compute logarithm in \texttt{Blockchain}, we generate Gaussian noise in \texttt{Client}, and pass the value to \texttt{Blockchain} as one of the parameters in function QueryMatch.  Besides, current Solidity version cannot operate float or double type data. To keep the precision, we scale up the noise amount during calculation, and then scale down the value before returning the noisy data to analysts. 


\begin{figure}[!t]
\centering
\includegraphics[width=0.5\textwidth,height=5.5cm]{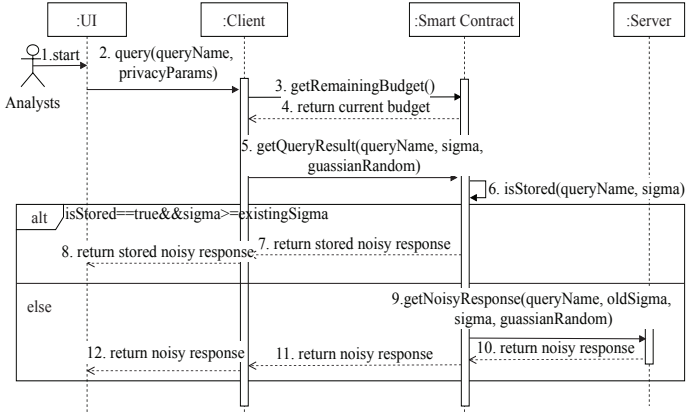}
\caption{The proposed blockchain-based system working flow  for differential-privacy  cost management.}
\label{fig:sequence}
\end{figure}

\section{Implementation and Experiments} \label{sec-Experiments}

In this section, we perform experiments to validate that the proposed system and algorithm are effective in saving privacy cost according to the system flow shown in Fig.~\ref{fig:sequence}. More specifically, a user sends a query through the UI, and then \texttt{Client} receives the query and forwards it to \texttt{Blockchain} smart contract. After the smart contract checks with stored data, it will decide whether to return the noisy response to \texttt{Client} directly or forward the request to \texttt{Server}. If \texttt{Server} receives the request, it will generate and return a noisy response  to the smart contract. 

\subsection{Experiment Setup}

We prototype a web application based on the system description in Section~\ref{sec-system}.  We use the Javascript language to write \texttt{Client}, whereas the Solidity language is for \texttt{Blockchain} smart contract. Besides, Web3 is used as the Javascript API to exchange information between  \texttt{Client} and \texttt{Blockchain} smart contract, and then Node.js and Express web framework are leveraged to set up \texttt{Server}. In addition, MongoDB is used as the database to host the real-world dataset. Our designed smart contracts are deployed on the Ropsten~\cite{ropsten} testnet with the MetaMask extension of the Chrome browser. The Ropsten testnet is a testing blockchain environment maintained by Ethereum, and it implements the same Proof-of-Work protocol as the main Ethereum network.  Fig.~\ref{fig:screen} shows the screenshot of our blockchain-based privacy management system. Fig.~\ref{fig:output} presents outputs while sending queries using the system. 


We evaluate the performance of the proposed differential privacy mechanism based on  a real-world dataset containing American community survey samples extracted from the \textit{Integrated Public Use Microdata Series} at \url{https://www.ipums.org}. There are 5000 records in the  dataset. Each record includes the following numerical attributes: ``Total personal income'', ``Total family income'', ``Age'', and categorical attributes: ``Race'', ``Citizenship status''. We set the privacy budget as $\epsilon_{\textup{budget}} = 8$ and $\delta_{\textup{budget}} = 10^{-4}$, which are commonly used to protect the privacy of a dataset~\cite{sanchez2014improving, wang2019collecting}. We consider five types of queries: ``average personal income'', ``average total family income'', ``frequency of US citizens'', ``frequency of white race'', and ``frequency of age more than 60''. For the privacy parameter of each query $Q_i$, we sample $\epsilon_i$ uniformly from $[0.1, 1.1]$ and sample $\delta_i$ uniformly from $[10^{-5}, 10^{-4}]$. The sensitivities of these queries are 202, 404, 0.0002, 0.0002, and 0.0002, 
respectively. We compute the sensitivity of a query based on  Section~\ref{subsec-ell2-sensitivity}.  For the query ``average total personal income'', since the user's total personal income ranges from $-5000$ to $700000$ in the dataset mentioned above, we assume the domain of total personal income is in the range of $[-10000, 1000000]$ for all possible datasets. The sensitivity is $(1000000-(-10000))/5000 = 202$ and the mechanism protects the privacy of all data within $[-10000, 1000000]$. Thus, it can protect the privacy of the dataset in our experiment. Suppose the received query is ``average total family income''. In that case, we assume the maximal variation is $[-20000, 2000000]$ for all possible datasets because the total family income's range is $[-5000,1379500]$ in the dataset we use. The sensitivity is $(2000000-(-20000))/5000=404$. Hence, our generated noise with the sensitivity of $404$ can protect the privacy of all data within $[-20000, 2000000]$. Therefore, it can protect the privacy of the dataset we use as well.  The sensitivity for queries ``frequency of US citizens'', ``frequency of white race'', and ``frequency of age more than 60'' is $1/5000 = 0.0002$.



\begin{figure}[!t]
\centering
\includegraphics[scale=0.3]{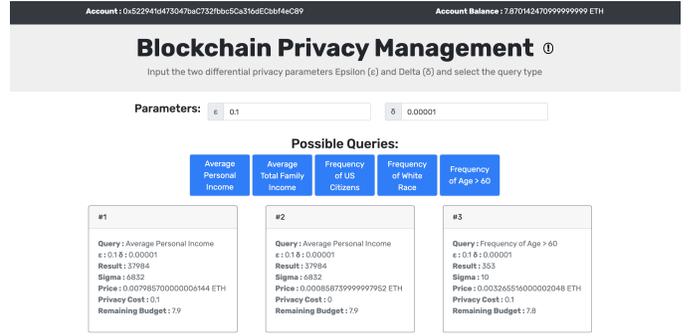}
\caption{Screenshot of blockchain-based privacy management system demo.}
\label{fig:screen}
\end{figure}

\begin{figure}[!t]
\centering
\includegraphics[scale=0.3]{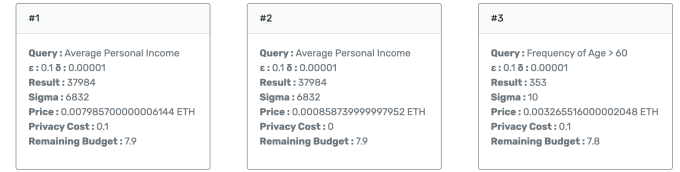}
\caption{Displaying of outputs with $\epsilon$ privacy cost.}
\label{fig:output}
\end{figure}




\begin{figure}[!t]
\centering
\includegraphics[scale=0.5]{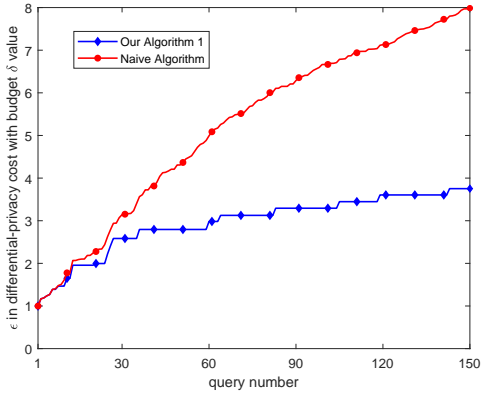}\vspace{-5pt}
\caption{Performance comparison of the sum of privacy cost.\vspace{-2pt}}
\label{fig:privacy_cost}
\end{figure}

\begin{figure}[!t]
\centering
\includegraphics[scale=0.5]{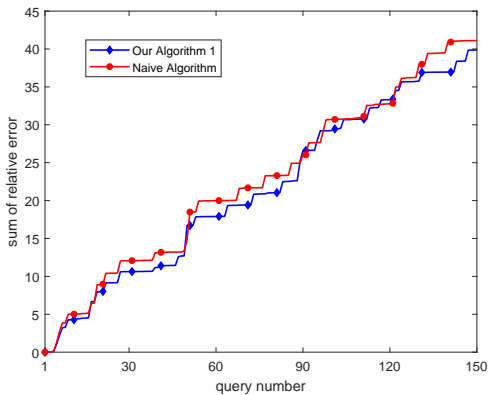}\vspace{-5pt}
\caption{Performance comparison of the sum of relative error.\vspace{-2pt}}
\label{fig:accuracy}
\end{figure}

\subsection{Experimental Results}

The benchmark of our experiment is a naive scheme which does not contain Algorithm~\ref{algmain} in the smart contract. That is, every query will be forwarded by the smart contract to \texttt{Server} to get the noisy response. Hence, no differential privacy cost can be reused in the naive scheme.

First, we use an experiment to validate that our proposed Algorithm~\ref{algmain} is effective in saving privacy cost. Thus, we design a performance comparison experiment by tracking privacy cost using our Algorithm~\ref{algmain} and the naive scheme, respectively. Specifically, we deploy two smart contracts implementing our Algorithm~\ref{algmain} and the naive scheme respectively on the Ropsten testnet. Then, we send 150 requests randomly selected in five query types from \texttt{Client} of the web application, and record the privacy cost of each query. As shown in Fig.~\ref{fig:privacy_cost},  compared with the naive scheme, the proposed algorithm saves significant privacy cost. 
When the number of the queries is 150, the differential-privacy cost of Algorithm~\ref{algmain} is about 52$\%$ less than that of the naive algorithm.  
We also observe that the privacy cost in the proposed scheme increases slowly when the number of queries increases, even trending to converge to a specific value. The reason is that, in Algorithm~\ref{algmain}, for each query type, we can always partially or fully reuse previous noisy answers when the query type is asked for a second time or more. Therefore, in our scheme, many queries are answered without incurring additional privacy cost if noisy responses fully reuse previous noisy answers.

Second, to prove that the proposed Algorithm~\ref{algmain} retains the accuracy of the dataset, we design another experiment to compare the sum of relative errors. We use the same smart contracts as those in the last experiment. We accumulate relative errors incurred in each query. Fig.~\ref{fig:accuracy} shows that the sum of relative errors of Algorithm~\ref{algmain} is comparable with that of the naive scheme. Since relative errors are similar between two schemes,  our results demonstrate that the proposed Algorithm~\ref{algmain} keeps the accuracy. 

As a summary, Fig.~\ref{fig:privacy_cost} and Fig.~\ref{fig:accuracy} together demonstrate that our Algorithm~\ref{algmain} can save privacy cost significantly without sacrificing the accuracy of the dataset. 

Third, we evaluate the latency of our system. The latency of our system is affected by the blockchain, MetaMask, network condition and the server. To shorten the speed of network transmission, we setup a local testnet at http://localhost:3000/ using Ganache-cli client with blockTime set as 15s~\cite{ganache2020}. Fig.~\ref{fig:latency} shows that the latency increases as the number of queries increases. The capacity of Ethereum's throughput is 20$\sim$60 Transmission Per Second (TPS)~\cite{gervais2016security,cao2019internet,vujivcic2018blockchain,stark2018making,tang2019public,miraz2019lapps}. When the number of queries reaches $60$, the latency increases significantly. In addition to Ethereum's throughput, both MetaMask and the capacity of the deployed device affect the latency. We test the case when the query results have been saved into the system. Thus, smart contract does not need to send requests to the server. We can obtain query results by using previous query results. The worst case is that smart contract has to send request every query which takes longer time to obtain the result because of the third party service Provable.

\begin{figure}[!h]
\centering
\includegraphics[scale=0.4]{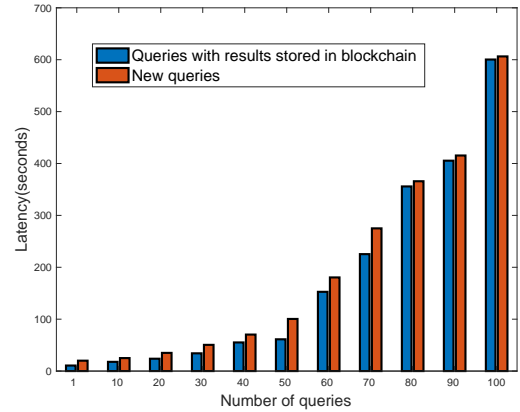}\vspace{-5pt}
\caption{Latency vs the number of queries.}
\label{fig:latency}
\end{figure}


Forth, we evaluate the relationship between the query utility and the privacy budget. As defined in~\cite{blum2013learning}, the privacy utility of a mechanism satisfies $(\alpha, \beta)$-useful if $|\widetilde{Q}_m(D)- Q_m(D)| \leq \alpha$ with probability at least $1 - \beta$. Thus, a small $\alpha$ means that the difference between the perturbed result and the actual result is small, which also reflects that the mechanism has a high utility. The noise added to a query can be calculated as ${\sigma} = \Gaussian(\Delta_{Q}, \epsilon, \delta)$, where $\Gaussian(\Delta_{Q}, \epsilon, \delta) \vspace{-0pt}=\sqrt{2\ln\frac{ 1.25}{\delta}}\times\frac{\Delta_{Q}}{\epsilon}$. We set $\delta = 10^{-5}$ and $\epsilon \in [1, 8]$. Appendix~\ref{sec-proof-theorem-4} proves that when we set $\beta = 0.05$, $\alpha = 2 \sigma$. Fig.~\ref{fig:utility}  and Fig.~\ref{fig:noise} illustrate how the utility and noise change as the privacy budget $\epsilon$ increases. Fig.~\ref{fig:utility} shows the value of $\alpha$ decreases when the privacy budget $\epsilon$ increases, meaning that the utility increases. In addition, the amount of noise added reflects the query utility as well. When less noise is added to the query response, the more utility the response gains. Fig.~\ref{fig:noise} shows that how the noise changes with the privacy budget. As the privacy budget increases, noise decreases, which means that the query utility increases. The amount of noise depends on the privacy budget  and the sensitivity value.  Queries such as ``Frequency of US citizens'', ``Frequency of white race'' and ``Frequency of age more than 60'' have the same sensitivity value $0.0002$, so the noise added to their responses is the same when the privacy budgets they use are equal.

\begin{figure}[!h]
\centering
\includegraphics[scale=0.4]{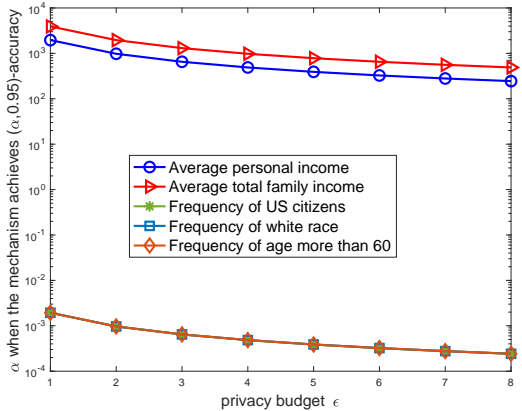}\vspace{-5pt}
\caption{Utility vs the privacy budget.\vspace{-10pt}}
\label{fig:utility}
\end{figure}

\begin{figure}[!h]
\centering
\includegraphics[scale=0.4]{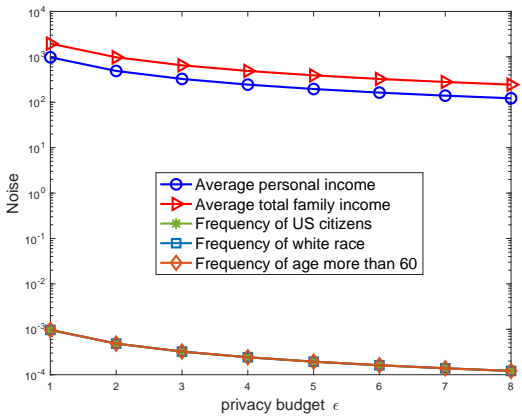}\vspace{-5pt}
\caption{Noise vs the privacy budget.\vspace{-10pt}}
\label{fig:noise}
\end{figure}

Fifth, we evaluate the efficiency of computing resources used by the DP scheme using or without the blockchain. The scheme without the blockchain's involvement is when data analysts call the server directly to obtain noisy answers.  We simulate this case using ``Apache JMeter'' API testing software to send queries with differential privacy parameters to the server directly. We consider the CPU usage as the metric for evaluating the efficiency of computing resources~\cite{zheng2018detailed, morishima2018accelerating}. 

\textbf{Experimental Settings:} We use a MacBook Pro with 2.3 GHz Quad-Core Intel Core i5, and 8 GB 2133 MHz LPDDR3  to run  applications. In the experiment,  we firstly deploy a locally private blockchain testnet using ``Go-Ethereum'' platform with three nodes mining. We then hold a separate server locally for testing  CPU usage for processing API requests sent from ``Apache JMeter''. We randomly send 50 queries for testing. At the same time, we use the ``Activity Monitor'' software to track and obtain their CPU usage~\cite{king2012ppcoin}.

\textbf{Experimental Results:} Fig.~\ref{fig:cpu} compares CPU usages spent by schemes with and without the blockchain. We can observe that the computation efficiency (i.e., CPU usage) in our blockchain-based scheme is higher than that using the server to handle directly. However, it is still acceptable. Node.js and Express.js web application frameworks, which we use to build the web server application, are CPU-intensive. One approach to save computing resources when using blockchain is to stop mining when there is not any incoming task. Only start mining when it is necessary. Therefore, our proposed scheme is efficient and practical with acceptable computation cost.

\begin{figure}[!h]
\centering
\includegraphics[scale=0.35]{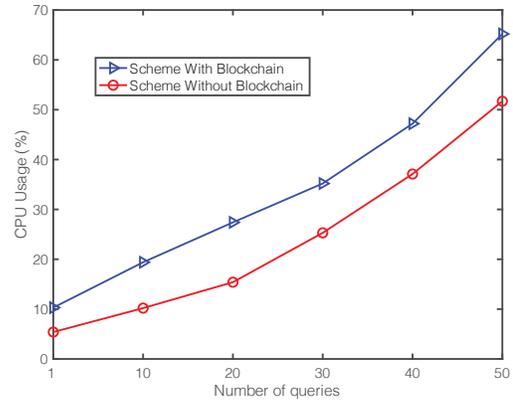}\vspace{-5pt}
\caption{Comparison of DP scheme's CPU usage with or without smart contract.\vspace{-10pt}}
\label{fig:cpu}
\end{figure}

\section{Related Work}\label{sec-related-work}

In this section, we first compare   our paper and a closely related study~\cite{yang2017differentially}, and then discuss other related work.

\subsection{Comparison with Yang~\emph{et~al.}~\cite{yang2017differentially}} \label{section-yang-drawbacks}
Yang~\emph{et~al.}~\cite{yang2017differentially} utilize blockchain and differential privacy technologies to achieve  the security and privacy protection during data sharing. 
Compared with~\cite{yang2017differentially}, we summarize the differences between our work and~\cite{yang2017differentially} as follows.
\begin{itemize}
\item Although Algorithm~\ref{algmain} of~\cite{yang2017differentially} claims to satisfy $\epsilon$-differential privacy, it does not since the noisy output's domain (i.e., the set of all possible values) depends on the input. The explanation is as follows. In~\cite{yang2017differentially}, for two neighboring datasets $D$ and $D'$,  there exists a subset $\mathcal{Y}$ of outputs such that $\bp{\widetilde{Q}(D)\in \mathcal{Y}} > 0$ but $\bp{\widetilde{Q}(D')\in \mathcal{Y}} = 0$. This \vspace{1pt} means $ \frac{\bp{\widetilde{Q}(D)\in \mathcal{Y}}}{\bp{\widetilde{Q}(D')\in \mathcal{Y}}} = \infty > e^{\epsilon}$, which violates $\epsilon$-differential privacy for any $\epsilon<\infty$.
\item \cite{yang2017differentially} does not discuss how to choose the small additional privacy parameter in its Algorithm~\ref{algmain}.
\item In~\cite{yang2017differentially}, when a query is asked for the first time, the Laplace mechanism of~\cite{dwork2006calibrating} for $\epsilon$-differential privacy is used to add Laplace noise to the true query result.  Afterwards,~\cite{yang2017differentially} adds new Laplacian noise on previous noisy output, which makes the new noisy response no longer follow Laplace distribution since the sum of independent Laplace random variables does not follow a Laplace distribution. Hence, the analysis in~\cite{yang2017differentially} is not effective.
\end{itemize}

We consider  $(\epsilon, \delta)$-differential privacy by using the Gaussian noise. The advantage of Gaussian noise over Laplace noise lies in the easier privacy analysis for the composition of different privacy-preserving algorithms, since   the sum of   independent Gaussian   random variables still follows the Gaussian distribution, while the sum of independent Laplace random variables does not obey a Laplace distribution.  




\subsection{Other Related Work}
Differential privacy, a strong mathematical model to guarantee the database's privacy, has attracted much attention in recent years. Blockchain is a fast-growing technology to provide security and privacy in a decentralized manner \cite{li2020mitigating, 8832210,8624307,8489897,tsai2019improved, fernandez2019atomic}. Feng~\emph{et~al.}~\cite{feng2019survey} summarize prior studies about privacy protection in blockchain system, including methodology for identity and transaction privacy preservation. 
In the following, we will introduce more recent studies utilizing blockchain or privacy techniques to provide privacy or security protection in identity, data, and transactions.

\textbf{Leveraging Blockchains for Identity Privacy/Security Protection.}
A few studies have focused on leveraging the blockchain to guarantee privacy/security in access control management or identity protection. For example, Zyskind~\emph{et~al.}~\cite{zyskind2015decentralizing} and Xia~\emph{et~al.}~\cite{xia2017bbds} both use blockchain in access control management. Zyskind~\emph{et~al.}~\cite{zyskind2015decentralizing} create a decentralized personal data management system to address users' concerns about privacy when using third-party mobile platforms.  Xia~\emph{et~al.}~\cite{xia2017bbds} propose a permissioned blockchain-based data sharing framework to allow only verified users to access the cloud data.
Lu~\emph{et~al.}~\cite{lu2018zebralancer} develop a private and anonymous decentralized crowdsourcing system ZebraLancer, which overcomes data leakage and identity breach in traditional decentralized crowdsourcing. The above studies focuse on identity privacy because the Blockchain is anonymous, whereas they do not consider the privacy protection for the database. 


\textbf{Leveraging Blockchains for Data Privacy/Security Protection.}
In addition to the identity privacy preservation, Hu~\emph{et~al.}~\cite{hu2018searching} replace the central server with a smart contract and construct a decentralized privacy-preserving search scheme for computing encrypted data while ensuring the privacy of data to prevent from misbehavings of a malicious centralized server. Luongo~\emph{et~al.}~\cite{luongokeep} use  secure multi-party computation to design a privacy primitive named Keep which allows contracts to manage and use private data without exposing the data to the public blockchain for protecting smart contracts on public blockchains. Alternatively, we use differential privacy standard to guarantee privacy. Moreover, blockchains are popular to be  used for  security protection of data sharing in  IoT scenarios~\cite{8624307,8489897,xia2017bbds}.

\textbf{Leveraging Blockchains for Transaction Privacy/Security Protection.}
Moreover, some previous studies use blockchain to guarantee security and privacy in transactions. For example, Henry~\emph{et~al.}~\cite{henry2018blockchain} propose that the blockchain should use mechanisms that piggyback on the overlay network, which is ready for announcing transactions to de-link users' network-level information instead of using an external service such as Tor to protect users' privacy. Gervais~\cite{gervais2016security} propose a quantitative framework to analyze the security of Proof-of-Work in blockchains, where the framework's inputs included security, consensus, and network parameters. Herrera-Joancomart{\'\i} and P{\'e}rez-Sol{\`a}~\cite{herrera2016privacy} focuse on privacy in bitcoin transactions. Sani~\emph{et~al.}~\cite{sani2019xyreum} propose a new blockchain Xyreum with high-performance and scalability to secure transactions in the Industrial Internet of Things. 


\textbf{Leveraging differential privacy for protecting privacy of IoT data.} IoT devices collect users' usage status periodically, which may contain sensitive information such as the energy consumption or location information. To avoid the privacy leakage, some studies use differential privacy mechanisms to protect the privacy of data~\cite{tudor2020bes,hassan2019privacy,xiong2018enhancing,liu2018epic,gai2019differential}. For example, Tudor~\emph{et~al.}~\cite{tudor2020bes} propose a streaming-based framework, Bes, to disclose IoT data. Hassan~\emph{et~al.}~\cite{hassan2019privacy} treat each IoT node as a node of blockchain to guarantee the IoT nodes' security, and they leverage differential privacy mechanisms to protect the privacy of data of each node.  To prevent adversaries  from intercepting the Internet traffic from/to the smart home gateway or profile residents’ behaviors through digital traces, Liu~\emph{et~al.}~\cite{liu2018epic} leverage differential privacy to develop a framework to prevent from attacks.  However, none of them discusses how to reuse the differential privacy budget.

\textbf{Differentially Private Algorithms.} Some differential privacy algorithms are proposed to provide privacy protection. Xiao~\emph{et~al.}~\cite{xiao2011ireduct} propose an algorithm to correlate the Lapalce noise added to different queries to improve the overall accuracy. Given a series of counting queries, the mechanism proposed by Li and Miklau~\cite{li2012adaptive}  select a  subset of queries to answer privately and use their noisy answers to derive answers for the remaining queries. For a set of non-overlapping counting queries, Kellaris and Papadopoulos~\emph{et~al.}~\cite{kellaris2013practical}  pre-process the counts by elaborate
grouping and smoothing them via averaging to reduce the sensitivity and thus the amount of injected noise. Given a workload of queries, Yaroslavtsev~\emph{et~al.}~\cite{yaroslavtsev2013accurate} introduce a solution to balance accuracy and efficiency by answering some queries more accurately than others.

\section{Discussion and Future Work}

In this section, we will discuss the meaning of differential privacy parameters, privacy of the  smart contract and queries.

\subsection{Differential Privacy Parameters}~\label{sec:dp-param-setting}

The value of differential privacy parameter $\epsilon$ represents the level of the protection provided by differential privacy mechanisms. McSherry~\emph{et al.}~\cite{mcsherry2010differentially, aaby2018privacy} quantify the strength of differential privacy as follows. When $\epsilon = 0$, the differential privacy mechanism provides perfect privacy. Then, when $\epsilon  \leq 0.1$, the protection is considered as strong, while $\epsilon  \geq 10$, the privacy protection is weak. The privacy parameter $\delta$ represents the small probability that a record gets altered in the database, so it should be very small. We sample $\delta$ uniformly from $[10^{-5},10^{-4}]$ for each query.

\subsection{Privacy for Smart Contract}

A smart contract is publicly available when it is deployed to the public blockchain. In our experiment, attackers can obtain the algorithm implemented in the smart contract. However, they still cannot obtain the accurate responses from noisy results even if they obtain the algorithm. There are some approaches that can be used to protect the privacy of the smart contract as follows :

First, Kosba~\emph{et al.}~\cite{kosba2016hawk} propose Hawk, a framework to build the privacy-preserving smart contracts. Hawk enables that programmers write private smart contracts without considering cryptography. The framework will generate a cryptographic protocol during compiling automatically.

Second, we can partially address this problem by deploying Ethereum to a private blockchain. We may combine the private blockchain and Proof-of-Authority~\cite{authority} consensus mechanism.  When Ethereum is deployed to the private blockchain, the private blockchain can set access control. Thus, the attackers need to break access control before accessing the smart contract. Therefore, when using a private blockchain, we consider access control to protect the smart contracts' privacy.

As it is complex to protect the smart contract's privacy, we would like to consider the smart contract's privacy as our future work.

\subsection{Privacy for Queries}

Differential privacy (DP) mechanisms consider that data analysts are untrusted, and the curator who holds the database is trusted. The trusted curator stores the database and responds to statistical queries made by an untrusted analyst so that DP will not protect the privacy of data analysts' queries. Moreover, DP supports statistical queries that may not include much sensitive information. When we use the smart contract, some data analysts may worry about the privacy of their queries. There are two ways to protect the privacy of queries :

First, we may use the private blockchain and conduct experiments using the private blockchain and Proof-of-Authority  consensus mechanism. Ethereum also supports deploying the smart contract to the private blockchain. In this case, the smart contract can be considered trusted, so that the sensitive information of queries will not leak.

Second, we may combine other cryptographic techniques with differential privacy. For example, Agarwal~\emph{et al.}~\cite{agarwal2019encrypted} design the encrypted databases (EDB) that support differentially-private statistical queries. In their paper, both the curator and the analyst are untrusted. The curator can outsource the database to an untrusted server securely by leveraging EDBs to encrypt operations.  Since the privacy protection for queries is complicated and may involve more privacy and security techniques, we would like to consider the privacy of queries as our future work.

Third, query privacy can be avoided by fixing the types of queries. Since the privacy budget is quite limited, it is impossible to let data analysts ask too many questions. Thus, one solution is to control types of queries. The system builder may take some time to select commonly used questions by data analysts, and then they set a dropdown list for data analysts to select questions. In this case, our system will not leak the queries' privacy because quries are standard.

\subsection{Difference between Our Proposed Scheme and Normal DP Schemes}

We pre-define queries for efficiently calculating the sensitivity values and saving users' time. The sensitivity values for different queries should be pre-defined even if we do not use blockchain.  When a query comes in many times with different differential privacy parameters, our scheme will play an essential role in saving the differential privacy budget. For example, many companies are trying to send the same query to a dataset because of the similar data analysis tasks. In this case, some of the privacy budgets can be saved. Since the privacy budget is a scarce resource regarding to the dataset, it is necessary to use our scheme.

However, when a query is seen for the first time, our scheme can only treat it as the new query. Since the differential privacy budget is quite limited, and sensitivities have to be calculated beforehand, a dataset will not support too many different statistical queries. If our system is implemented in the real world, similar to Fig.~\ref{fig:screen}, a list of queries supported maybe provided to control the variations of queries instead of letting data analysts type in different queries freely.

\section{Conclusion}\label{sec-conclusion}

In this paper, we use a blockchain-based approach for tracking and saving differential-privacy cost. In our design, we propose an algorithm that reuses noise fully or partially for different instances of the same query type to minimize the accumulated privacy cost. The efficiency of the algorithm is proved  via a rigorous mathematical proof. Moreover, we design a blockchain-based system for conducting real-world experiments to confirm the effectiveness of the proposed approach.



\bibliographystyle{IEEEtran}
\bibliography{references}

\appendix\label{appendix}
\subsection{\textbf{Proof of Theorem~\ref{thm-Alg1-explain-privacy-cost-with-r}}}~\label{sec-proof-theorem-1}
\noindent 
\begin{proof}

(i) As noted in the statement of Theorem~\ref{thm-Alg1-explain-privacy-cost-with-r}, we suppose that before answering query $Q_m$ and after answering $Q_1, Q_2, \ldots, Q_{m-1}$, the privacy loss $L_{\widetilde{Q}_1 \Vert  \widetilde{Q}_2 \Vert   \ldots  \Vert   \widetilde{Q}_{m-1}}(D, D')$ is given by $\mathcal{N}(\frac{A(D, D')}{2},A(D, D'))$ for some $A(D, D')$. Later we will show the existence of such $A(D, D')$. Then when $y_i$ follows the probability distribution of random variable $\widetilde{Q}_i(D)$ for each $i\in \{1,2,\ldots, m-1\}$, we have the following for the privacy loss $L_{\widetilde{Q}_1 \Vert  \widetilde{Q}_2 \Vert   \ldots  \Vert   \widetilde{Q}_{m-1}}(D, D';y_1, y_2, \ldots, y_{m-1})$:
\begin{align} 
&  L_{\widetilde{Q}_1 \Vert  \widetilde{Q}_2 \Vert   \ldots  \Vert   \widetilde{Q}_{m-1}}(D, D';y_1, y_2, \ldots, y_{m-1})   \nonumber
  \\ &:= 
  \ln \frac{ \mathbb{F}\left[\cap_{i=1}^{m-1} \left[\widetilde{Q}_i(D) = y_i\right]\right] }{  \mathbb{F}\left[\cap_{i=1}^{m-1} \left[\widetilde{Q}_i(D') = y_i \right]\right]} \sim \mathcal{N}(\frac{A(D, D')}{2},A(D, D')),\label{def-DP-loss-m1}
 \end{align}
where we use ``$\sim $'' to mean ``obeying the distribution''.

Now we need to analyze the privacy loss after answering the $m$ queries $Q_1, Q_2, \ldots, Q_{m}$. We look at  the privacy loss $L_{\widetilde{Q}_1 \Vert  \widetilde{Q}_2 \Vert   \ldots  \Vert   \widetilde{Q}_{m}}(D, D';y_1, y_2, \ldots, y_{m})$ defined as follows:
\begin{align} 
&  L_{\widetilde{Q}_1 \Vert  \widetilde{Q}_2 \Vert   \ldots  \Vert   \widetilde{Q}_{m}}(D, D';y_1, y_2, \ldots, y_{m})   \nonumber
  \\ &:= 
  \ln \frac{ \mathbb{F}\left[\cap_{i=1}^{m} \left[\widetilde{Q}_i(D) = y_i\right]\right] }{  \mathbb{F}\left[\cap_{i=1}^{m} \left[\widetilde{Q}_i(D') = y_i \right]\right]} .\label{def-DP-loss-m}
 \end{align}
Hence, we use~(\ref{def-DP-loss-m1}) to analyze~(\ref{def-DP-loss-m}). From~(\ref{reuse-expr2}), since $\widetilde{Q}_m(D)$ is generated by reusing $\widetilde{Q}_j(D)$ and generating additional noise (if necessary), where $j$ is an integer in $\{1,2,\ldots,m-1\}$ as noted in the statement of Theorem~\ref{thm-Alg1-explain-privacy-cost-with-r}, we have
\begin{align} 
&  L_{\widetilde{Q}_1 \Vert  \widetilde{Q}_2 \Vert   \ldots  \Vert   \widetilde{Q}_{m}}(D, D';y_1, y_2, \ldots, y_{m})   \nonumber
  \\ & = 
  \ln \frac{ \mathbb{F}\left[\cap_{i=1}^{m-1} \left[\widetilde{Q}_i(D) = y_i\right]\right] \fr{\widetilde{Q}_m(D) = y_m~|~\widetilde{Q}_j(D) = y_j} }{  \mathbb{F}\left[\cap_{i=1}^{m-1} \left[\widetilde{Q}_i(D') = y_i \right]\right] \fr{\widetilde{Q}_m(D') = y_m~|~\widetilde{Q}_j(D') = y_j}}  \nonumber
  \\ & = \ln \frac{ \mathbb{F}\left[\cap_{i=1}^{m-1} \left[\widetilde{Q}_i(D) = y_i\right]\right]  }{  \mathbb{F}\left[\cap_{i=1}^{m-1} \left[\widetilde{Q}_i(D') = y_i \right]\right] } \nonumber
  \\ & \quad + \ln \frac{  \fr{\widetilde{Q}_m(D) = y_m~|~\widetilde{Q}_j(D) = y_j} }{   \fr{\widetilde{Q}_m(D') = y_m~|~\widetilde{Q}_j(D') = y_j}}. \label{def-DP-loss-m4}
 \end{align}
 
 We now discuss the first term $ \ln \frac{ \mathbb{F}\left[\cap_{i=1}^{m-1} \left[\widetilde{Q}_i(D) = y_i\right]\right]  }{  \mathbb{F}\left[\cap_{i=1}^{m-1} \left[\widetilde{Q}_i(D') = y_i \right]\right] }$ and the second term $ \ln \frac{  \fr{\widetilde{Q}_m(D) = y_m~|~\widetilde{Q}_j(D) = y_j} }{   \fr{\widetilde{Q}_m(D') = y_m~|~\widetilde{Q}_j(D') = y_j}}$ in the last row of~(\ref{def-DP-loss-m4}). To begin with, from~(\ref{def-DP-loss-m1}), the first term in the last row of~(\ref{def-DP-loss-m4}) follows the Gaussian distribution $\mathcal{N}(\frac{A(D, D')}{2},A(D, D'))$. Next, we analyze the second term in the last row of~(\ref{def-DP-loss-m4}). 

When $\widetilde{Q}_j(D)$ and $\widetilde{Q}_m(D)$ take $y_j$ and $y_m$ respectively,  $\widetilde{Q}_j(D) - {Q}_j(D)$ and $\widetilde{Q}_m(D) - Q_m(D) - r [\widetilde{Q}_j(D) -Q_j(D) ]$ take the following defined $g_j$ and $g_m$ respectively:
\begin{align} 
g_j & := y_j - {Q}_j(D),  \\
g_m & := y_m - Q_m(D) - r [y_j -Q_j(D) ]. \label{newdefgm}
\end{align}  
For $D'$ being a neighboring dataset of $D$, we further define 
\begin{align} 
h_j & :=  {Q}_j(D) - {Q}_j(D'), \label{eqdefinehj-1} \\
h_m & := {Q}_m(D) - {Q}_m(D'), \label{eqdefinehm-1}
\end{align} 
so that
\begin{align} 
g_j + h_j & =  y_j - {Q}_j(D'), \\
g_m + h_m - r h_j & = y_m - Q_m(D') - r [y_j -Q_j(D') ].
\end{align} 
Note that $h_j$ and $h_m$ are the same since $Q_j$ and $Q_m$ are the same.
From the above analysis, we obtain :
\begin{align} 
 & \fr{\widetilde{Q}_m(D) = y_m~|~\widetilde{Q}_j(D) = y_j}  \nonumber
\\ & =  
 \fr{ \begin{array}{l}   \widetilde{Q}_m(D) - Q_m(D) - r [\widetilde{Q}_j(D) -Q_j(D) ]   = g_m \\ ~|~\widetilde{Q}_j(D) = y_j \end{array}}   \nonumber 
\\ & \stackrel{\textup{(b)}}{=}  \frac{1}{\sqrt{2\pi ({\sigma_m}^2 - r^2 {\sigma_j}^2)}} e^{-\frac{{g_m}^2}{2({\sigma_m}^2 - r^2 {\sigma_j}^2)}}  , 
\end{align} 
where step (b) follows since where $\widetilde{Q}_j(D) - {Q}_j(D)$ is a zero-mean Gaussian random variable with variance $\sigma_j^2$ and \mbox{$\widetilde{Q}_m(D) - Q_m(D) - r [\widetilde{Q}_j(D) -Q_j(D) ]$}  is a zero-mean Gaussian random variable with variance ${\sigma_m}^2 - r^2 {\sigma_j}^2$.

Similarly, for dataset $D'$, we have :
\begin{align} 
 &\fr{\widetilde{Q}_m(D') = y_m~|~\widetilde{Q}_j(D') = y_j}   \nonumber
\\ &  =  
\fr{ \begin{array}{l}   \widetilde{Q}_m(D') - Q_m(D') - r [\widetilde{Q}_j(D') -Q_j(D') ]  \\ = g_m + h_m - r h_j \\ ~|~\widetilde{Q}_j(D') = y_j   \end{array}} \nonumber 
\\ & \stackrel{\textup{(b)}}{=}  \frac{1}{\sqrt{2\pi ({\sigma_m}^2 - r^2 {\sigma_j}^2)}} e^{-\frac{(g_m + h_m - r h_j)^2}{2({\sigma_m}^2 - r^2 {\sigma_j}^2)}}, 
\end{align} 
where step (b) follows since where $\widetilde{Q}_j(D') - {Q}_j(D')$ is a Gaussian random variable with variance $\sigma_j^2$ and \mbox{$\widetilde{Q}_m(D') - Q_m(D') - r [\widetilde{Q}_j(D') -Q_j(D')$}  is a zero-mean Gaussian random variable with variance ${\sigma_m}^2 - r^2 {\sigma_j}^2$.

Then,
\begin{align} 
& \ln \frac{\fr{\widetilde{Q}_m(D) = y_m~|~\widetilde{Q}_j(D) = y_j}  }{\fr{\widetilde{Q}_m(D') = y_m~|~\widetilde{Q}_j(D') = y_j}  } \nonumber
\\ & = \ln \frac{~~  \frac{1}{\sqrt{2\pi ({\sigma_m}^2 - r^2 {\sigma_j}^2)}} e^{-\frac{{g_m}^2}{2({\sigma_m}^2 - r^2 {\sigma_j}^2)}}~~}{~~ \frac{1}{\sqrt{2\pi ({\sigma_m}^2 - r^2 {\sigma_j}^2)}} e^{-\frac{(g_m + h_m - r h_j)^2}{2({\sigma_m}^2 - r^2 {\sigma_j}^2)}} ~~}  \nonumber
\\ & =  \frac{(g_m + h_m - r h_j)^2 - {g_m}^2}{2({\sigma_m}^2 - r^2 {\sigma_j}^2)} \nonumber \\ & =    \frac{g_m (h_m - r h_j)}{{\sigma_m}^2 - r^2 {\sigma_j}^2}  + \frac{(h_m - r h_j)^2}{2({\sigma_m}^2 - r^2 {\sigma_j}^2)}.~\label{eq:proof-reuse}
\end{align}  

The above~(\ref{eq:proof-reuse}) presents the second term in the last row of~(\ref{def-DP-loss-m4}). At first glance, it may seem that the first term $ \ln \frac{ \mathbb{F}\left[\cap_{i=1}^{m-1} \left[\widetilde{Q}_i(D) = y_i\right]\right]  }{  \mathbb{F}\left[\cap_{i=1}^{m-1} \left[\widetilde{Q}_i(D') = y_i \right]\right] }$ and the second term $ \ln \frac{  \fr{\widetilde{Q}_m(D) = y_m~|~\widetilde{Q}_j(D) = y_j} }{   \fr{\widetilde{Q}_m(D') = y_m~|~\widetilde{Q}_j(D') = y_j}}$ in the last row of~(\ref{def-DP-loss-m4}) are dependent since they both involve $y_j$. However, we have shown from~(\ref{eq:proof-reuse}) above that the second term in the last row of~(\ref{def-DP-loss-m4}) depends on only the random variable $g_m$ (note that terms in~(\ref{eq:proof-reuse}) other than $g_m$ are all given), which is the amount of additional Gaussian noise used to generated $\widetilde{Q}_m(D)$ according to~(\ref{reuse-expr2}) and~(\ref{newdefgm}); i.e., the second term in the last row of~(\ref{def-DP-loss-m4}) is actually independent of the first term in the last row of~(\ref{def-DP-loss-m4}). From~(\ref{def-DP-loss-m1}), the first term in the last row of~(\ref{def-DP-loss-m4}) follows the Gaussian distribution $\mathcal{N}(\frac{A(D, D')}{2},A(D, D'))$. Next, we show that~(\ref{eq:proof-reuse}) presenting the second term in the last row of~(\ref{def-DP-loss-m4}) also follows a Gaussian distribution. 

Since $g_m$ follows a zero-mean Gaussian distribution with variance ${\sigma_m}^2 - r^2 {\sigma_j}^2$, clearly $ \frac{g_m (h_m - r h_j)}{{\sigma_m}^2 - r^2 {\sigma_j}^2}$ follows a zero-mean Gaussian distribution with variance given by
\begin{align}
&  \bigg[\frac{ (h_m - r h_j)}{{\sigma_m}^2 - r^2 {\sigma_j}^2}\bigg]^2 \times ({\sigma_m}^2 - r^2 {\sigma_j}^2)  \nonumber  \\ &  =  \frac{(h_m - r h_j)^2}{{\sigma_m}^2 - r^2 {\sigma_j}^2}. \label{defineV-1}
\end{align}

Since $Q_m$ and $Q_j$ are the same, we obtain from Eq.~(\ref{eqdefinehj-1}) and Eq.~(\ref{eqdefinehm-1}) that $h_j = h_m = {Q}_m(D) - {Q}_m(D')$, which we use to write Eq.~(\ref{defineV-1}) as
\begin{align}
\frac{[\| Q_m(D) -Q_m(D') \|_2]^2 (1 - r )^2}{{\sigma_m}^2 - r^2 {\sigma_j}^2}. \label{defineV-v2}
\end{align}
Summarizing the above, privacy loss is
\begin{align} 
B_r(D, D'): = A(D, D') + \frac{[\| Q_m(D) -Q_m(D') \|_2]^2 (1 - r )^2}{{\sigma_m}^2 - r^2 {\sigma_j}^2}. ~\label{eq:proof-reuseBA}
\end{align}

As noted in the statement of Theorem~\ref{thm-Alg1-explain-privacy-cost-with-r}, we suppose that before answering query $Q_m$ and after answering $Q_1, Q_2, \ldots, Q_{m-1}$, the privacy loss $L_{\widetilde{Q}_1 \Vert  \widetilde{Q}_2 \Vert   \ldots  \Vert   \widetilde{Q}_{m-1}}(D, D')$ is given by $\mathcal{N}(\frac{A(D, D')}{2},A(D, D'))$ for some $A(D, D')$. With the above result~(\ref{eq:proof-reuseBA}), we can actually show that there indeed exists such $A(D, D')$. This follows from mathematical induction. For the base case; i.e., when only one query is answered, the result follows from Lemma 3 of~\cite{balle2018improving}. The induction step is given by the above result~(\ref{eq:proof-reuseBA}). Hence, we have shown the   existence of   $A(D, D')$. With this result and~(\ref{eq:proof-reuseBA}), we have completed proving Result~(i) of  Theorem~\ref{thm-Alg1-explain-privacy-cost-with-r}.

(ii) The optimal $r$ is obtained by minimizing $B_r(D, D')$ and hence minimizing $\frac{ (1 - r )^2}{{\sigma_m}^2 - r^2 {\sigma_j}^2}$. Analyzing the monotonicity of this expression, we derive the optimal $r$ as in Eq.~(\ref{roptimal}).
The first-order derivative of $B_r(D, D')$ to $r$ is:
\begin{align}
   B_r(D, D')'= \frac{-2(r{\sigma_j}^2-{\sigma_m}^2)(r-1)}{(r^2{\sigma_j}^2-{\sigma_m}^2)^2}.~\label{eq:first-order-B}
\end{align}
\begin{itemize}
    \item Case 1: if $\sigma_m \geq \sigma_j$, $B_r(D, D')' \geq 0$ when $r \in [1, \frac{\sigma_m}{\sigma_j}]$, and $B_r(D, D')' < 0$ when $r \in (-\infty, 1) \cup (\frac{\sigma_m}{\sigma_j}, +\infty)$. Hence, the optimal $r$ to minimize $B_r(D, D')$ is at $r = 1$.
    \item Case 2: if $\sigma_m < \sigma_j$, $B_r(D, D')' \geq 0$ when $r \in [\frac{\sigma_m}{\sigma_j},1]$, and $B_r(D, D')' < 0$ when $r \in (-\infty, \frac{\sigma_m}{\sigma_j}) \cup (1, +\infty)$. Hence, the optimal $r$ to minimize $B_r(D, D')$ is at $r =(\frac{\sigma_m}{\sigma_j})^2$.
\end{itemize}
Thus, we obtain optimal values of $r$ as Eq.~(\ref{roptimal}).
\end{proof}

\subsection{\textbf{Proof of Lemma~\ref{lem-privacy-cost-DP-condition}} }~\label{sec-proof-lemma-2}
\noindent 
\begin{proof}
Consider a query $R$ with $\ell_2$-sensitivity being $1$. Let $\widetilde{R}$ be the mechanism of adding Gaussian noise amount $\mu : = \frac{1}{\sqrt{\max_{\textup{neighboring datasets $D,D'$}} V(D, D')}} $ to $R$. From Corollary~\ref{cor-Alg1-explain-privacy-cost-with1}, the privacy loss of randomized mechanism $\widetilde{R}$ with respect to neighboring datasets $D$ and $D'$ is given by $\mathcal{N}(\frac{U(D, D')}{2},U(D, D'))$ for  $U(D, D'): = \frac{[\| R(D) -R(D') \|_2]^2}{{\mu}^2} $. By considering the $\ell_2$-sensitivity of $R$ (i.e., $\| R(D) -R(D') \|_2$) as $1$, $\max_{\textup{neighboring datasets $D,D'$}} V(D, D')$ and $\max_{\textup{neighboring datasets $D,D'$}} U(D, D')$ are the same. In addition, from Theorem 5 of~\cite{balle2018improving}, letting $Y$ (resp., $\widetilde{R}$) satisfy $(\epsilon, \delta)$-differential privacy can be converted to a condition on $\max_{\textup{neighboring datasets $D,D'$}} V(D, D')$ (resp., $\max_{\textup{neighboring datasets $D,D'$}} U(D, D')$).  Then letting $Y$ satisfy $(\epsilon, \delta)$-differential privacy is the same as letting $\widetilde{R}$ satisfy $(\epsilon, \delta)$-differential privacy. From Lemma~\ref{lemma-Gaussian}, $\widetilde{R}$ achieves $(\epsilon, \delta)$-differential privacy with $\mu = \Gaussian(1, \epsilon, \delta)$; i.e., if $\max_{\textup{neighboring datasets $D,D'$}} V(D, D') =  [\Gaussian(1, \epsilon, \delta)]^{-2} $. Summarizing the above, we complete proving Lemma~\ref{lem-privacy-cost-DP-condition}.
\end{proof}

\subsection{\textbf{Proof of Theorem~\ref{Alg1-explain-privacy-cost}} }~\label{sec-proof-theorem-2}
\noindent 
\begin{proof}
We use Theorem~\ref{thm-Alg1-explain-privacy-cost-with-r} to show Results~\ding{172}~\ding{173} and~\ding{174} of Theorem~\ref{Alg1-explain-privacy-cost}. Proof of~\ding{172}: In Case 2A) and Case 2C), $Q_m$ can reuse previous noise. Hence, the privacy loss will still be $\mathcal{N}(\frac{A(D, D')}{2},A(D, D'))$ according to Eq.~(\ref{Brresult}).\\
Proof of~\ding{173}: In Case 1), $Q_m$ cannot reuse previous noisy answers, and the new noise follows $\mathcal{N}(0, \sigma_m)$. Thus, $B(D, D'):=A(D, D') + \frac{[\| Q_m(D) -Q_m(D') \|_2]^2}{{\sigma_m}^2}$.\\
Proof of~\ding{174}: In Case 2B), $Q_m$ can reuse previous noisy answers partially, so we can prove it using Eq.~(\ref{Brresult}). 

Then, Lemma~\ref{lem-privacy-cost-DP-condition} further implies Results~\ding{175}~\ding{176} and~\ding{177} of Theorem~\ref{Alg1-explain-privacy-cost}.\\
Proof of~\ding{175}:  $Q_m$ can fully reuse the old noisy result in Cases~2A) and 2C). Thus, the privacy level does not change. \\
Proof of~\ding{176}: From Lemma~\ref{lem-privacy-cost-DP-condition}, we have $\hspace{-2pt}\max_{\textup{neighboring datasets $D,D'$}} A(D, D')\hspace{-2pt} =\hspace{-2pt}  [\Gaussian(1, \epsilon_{\textup{old}}, \delta_{\textup{budget}})]^{-2}$ and
\mbox{$\max_{\textup{neighboring datasets $D,D'$}} \hspace{-2pt}\bigg\{\hspace{-2pt} A(D, D')\hspace{-1pt} +\hspace{-1pt} [\| Q_m(D) \hspace{-1pt}-\hspace{-1pt}Q_m(D') \|_2]^2 $}$ \\ \times \hspace{-2pt} \frac{1}{{\sigma_m}^2} \bigg\}\hspace{-2pt} = \hspace{-2pt} [\Gaussian(1, \epsilon_{\textup{new}}, \delta_{\textup{budget}})]^{-2}$. The above two equations yield
\mbox{$  [\Gaussian(1, \epsilon_{\textup{new}}, \delta_{\textup{budget}})]^{-2} - [\Gaussian(1, \epsilon_{\textup{old}}, \delta_{\textup{budget}})]^{-2} $} \\$ =\max_{\textup{neighboring datasets $D,D'$}} [\| Q_m(D) -Q_m(D') \|_2]^2 \times \frac{1}{{\sigma_m}^2} = {\Delta_{Q_m}}^2 \times  \frac{1}{{\sigma_m}^2} = {\sigma_m}^2$. Hence, Gaussian($\Delta_{Q_m}$,$\epsilon\textup{\_squared\_cost}$,$\delta_{\textup{bugdet}}$) = $\sigma_m$.\\
Proof of~\ding{177}: From Lemma~\ref{lem-privacy-cost-DP-condition}, we have $\max_{\textup{neighboring datasets $D,D'$}} A(D, D') =  [\Gaussian(1, \epsilon_{\textup{old}}, \delta_{\textup{budget}})]^{-2} $ and
\mbox{$\max_{\textup{neighboring datasets $D,D'$}} \hspace{-2pt}\bigg\{\hspace{-2pt} A(D, D')\hspace{-1pt} +\hspace{-1pt} [\| Q_m(D) \hspace{-1pt}-\hspace{-1pt}Q_m(D') \|_2]^2 $}  $ \times \left[ \frac{1}{{\sigma_m}^2} - \frac{1}{[\min(\boldsymbol{\Sigma}_t)]^2}\right] \bigg\}=  [\Gaussian(1, \epsilon_{\textup{new}}, \delta_{\textup{budget}})]^{-2} $.
The above two equations yield
\mbox{$  [\Gaussian(1, \epsilon_{\textup{new}}, \delta_{\textup{budget}})]^{-2} - [\Gaussian(1, \epsilon_{\textup{old}}, \delta_{\textup{budget}})]^{-2} $} \\$ =\max_{\textup{neighboring datasets $D,D'$}} [\| Q_m(D) -Q_m(D') \|_2]^2 \times \left[ \frac{1}{{\sigma_m}^2} - \frac{1}{[\min(\boldsymbol{\Sigma}_t)]^2}\right] = {\Delta_{Q_m}}^2 \times \left[ \frac{1}{{\sigma_m}^2} - \frac{1}{[\min(\boldsymbol{\Sigma}_t)]^2}\right]  $. Then using the expression of $\Gaussian(\Delta_{Q}, \epsilon, \delta)$ from Lemma~\ref{lemma-Gaussian}, we further obtain Result~\ding{177}.
\end{proof}

\subsection{\textbf{Proof of Theorem~\ref{Alg1-total-privacy-cost}} }~\label{sec-proof-theorem-3}

\noindent 
\begin{proof}

First, from Theorem~\ref{Alg1-explain-privacy-cost}, after Algorithm~\ref{algmain} is used to answer all $n$ queries with query $Q_i$ being answered under $(\epsilon_i, \delta_i)$-differential privacy, the total privacy loss with respect to neighboring datasets $D$ and $D'$ is given by $\mathcal{N}(\frac{G(D, D')}{2},G(D, D'))$ for some $G(D, D')$.

Next, we use Theorem~\ref{Alg1-explain-privacy-cost} to further show that the expression of $G(D, D')$ is given by Eq.~(\ref{label-eq-G}). From Theorem~\ref{Alg1-explain-privacy-cost}, among all queries, only queries belonging to Cases~1) and~2B) contribute to $G(D, D')$. Below we discuss the contributions respectively.

With $N_1$ denoting the set of $i \in \{1,2,\ldots,n\}$ such that $Q_i$ is in Cases~1), we know from Result~\ding{173} of Theorem~\ref{Alg1-explain-privacy-cost} that the contributions of queries in Cases~1) to $G(D, D')$ is  given by
\begin{align}
    \sum_{i \in N_1}\frac{[\| Q_i(D) -Q_i(D') \|_2]^2}{{\sigma_i}^2}.~\label{eq-n-1-loss}
\end{align}


Below we use Result~\ding{174} of Theorem~\ref{Alg1-explain-privacy-cost} to compute the contributions of queries in Case~2B) to $G(D, D')$. For $T_{2B}$ being the set of query types in Case~2B), we discuss each query type $t\in T_{2B}$ respectively.

From Result~\ding{174} of Theorem~\ref{Alg1-explain-privacy-cost}, the contribution  to $G(D, D')$ by answering $Q_{j_{t,1}}$ under differential privacy is
\begin{align}
     [\| Q_{j_{t,1}}(D) -Q_{j_{t,1}}(D') \|_2]^2 \left(\hspace{-1pt}  \frac{1}{\sigma_{j_{t,1}}^2} -  \frac{1}{\sigma_{j_{t,0}}^2} \right). \nonumber
\end{align}
Similarly, the contribution  to $G(D, D')$ by answering $Q_{j_{t,2}}$ under differential privacy is
\begin{align}
     [\| Q_{j_{t,2}}(D) -Q_{j_{t,2}}(D') \|_2]^2 \left(\hspace{-1pt}  \frac{1}{\sigma_{j_{t,2}}^2} -  \frac{1}{\sigma_{j_{t,1}}^2} \right). \nonumber
\end{align}
Similar analyses are repeated for additional type-$t$ queries in Case~2B). In particular, for each $s \in \{1,2,\ldots, m_t\}$, the contribution  to $G(D, D')$ by answering $Q_{j_{t,s}}$ under differential privacy is  
\begin{align}
     [\| Q_{j_{t,s}}(D) -Q_{j_{t,s}}(D') \|_2]^2 \left(\hspace{-1pt}  \frac{1}{\sigma_{j_{t,s}}^2} -  \frac{1}{\sigma_{j_{t,s-1}}^2} \right).\label{eq-contribution-s}
\end{align}
Summing all~(\ref{eq-contribution-s}) for $s \in \{1,2,\ldots, m_t\}$, we obtain that for each query type $t\in T_{2B}$, the contributions  to $G(D, D')$ by answering $Q_{j_{t,1}}, Q_{j_{t,2}}, \ldots, Q_{j_{t,m_t}}$ under differential privacy is  
\begin{align}
\sum_{s \in \{1,2,\ldots, m_t\}} [\| Q_{j_{t,s}}(D) -Q_{j_{t,s}}(D') \|_2]^2 \left(\hspace{-1pt}  \frac{1}{\sigma_{j_{t,s}}^2} -  \frac{1}{\sigma_{j_{t,s-1}}^2} \right)  .~\label{eq-sum-2b-1}
\end{align}

Since  $Q_{j_{t,0}}, Q_{j_{t,1}}, \ldots, Q_{j_{t,m_t}}$ for $j_{t,0},j_{t,1},\ldots ,j_{t,m_t}$ are all type-$t$ queries, $\| Q_{j_{t,s}}(D) -Q_{j_{t,s}}(D') \|_2$ are all the same for $s \in \{1,2,\ldots, m_t\}$. Hence, we write~(\ref{eq-sum-2b-1}) as
\begin{align}
   & \sum_{s \in \{1,2,\ldots, m_t\}}  \bigg\{  \frac{[\| Q_{j_{t,s}}(D) -Q_{j_{t,s}}(D') \|_2]^2}{\sigma_{j_{t,s}}^2} \nonumber \\&\quad \quad\quad \quad\quad \quad-  \frac{[\| Q_{j_{t,s-1}}(D) -Q_{j_{t,s-1}}(D') \|_2]^2}{\sigma_{j_{t,s-1}}^2} \bigg\} \nonumber \\
   & \quad=  \frac{{[\| Q_{j_{t,m_t}}(D) -Q_{j_{t,m_t}}(D') \|_2]}^2}{{\sigma_{j_{t,m_t}}^2}} \nonumber \\&\quad \quad \quad \quad \quad - \frac{{[\| Q_{j_{t,0}}(D) -Q_{j_{t,0}}(D') \|_2]}^2}{{\sigma_{j_{t,0}}^2}} . ~\label{eq-sum-2b}
\end{align}
Summing all~(\ref{eq-sum-2b}) for $t\in T_{2B}$, the contributions  to $G(D, D')$ by answering all queries in Case~2B) is
\begin{align}
   &   \sum_{t\in T_{2B}} \Bigg\{\frac{{[\| Q_{j_{t,m_t}}(D) -Q_{j_{t,m_t}}(D') \|_2]}^2}{{\sigma_{j_{t,m_t}}^2}} \nonumber
\\ & \quad \quad \quad    - \frac{{[\| Q_{j_{t,0}}(D) -Q_{j_{t,0}}(D') \|_2]}^2}{{\sigma_{j_{t,0}}^2}}\Bigg\} .  \label{eq-sum-2b2}
\end{align}
Then $G(D, D')$ as the sum of (\ref{eq-n-1-loss}) and (\ref{eq-sum-2b2}) is given by Eq.~(\ref{label-eq-G}). 

Summarizing the above, we have proved that after Algorithm~\ref{algmain} is used to answer all $n$ queries under differential privacy, the total privacy loss with respect to neighboring datasets $D$ and $D'$ is given by $\mathcal{N}(\frac{G(D, D')}{2},G(D, D'))$ for $G(D, D')$ in Eq.~(\ref{label-eq-G}). Furthermore, under
\begin{align}
  \max_{\textup{neighboring datasets $D,D'$}} \| Q_i(D) -Q_i(D') \|_2 =
 \Delta_{Q_i}  \nonumber
\end{align}
and
\begin{align}
    &\max_{\textup{neighboring datasets $D,D'$}} \| Q_{j_{t,m_t}}(D) -Q_{j_{t,m_t}}(D') \|_2 \nonumber \\&= 
   \max_{\textup{neighboring datasets $D,D'$}} \| Q_{j_{t,0}}(D) -Q_{j_{t,0}}(D') \|_2  = \Delta(\textup{type-}t),\nonumber
\end{align}
we use Eq.~(\ref{label-eq-G}) to have $\max_{\textup{neighboring datasets $D,D'$}} G(D, D') $ given by Eq.~(\ref{label-eq-G-maxDDprime}).

Finally, from Lemma~\ref{lem-privacy-cost-DP-condition}, the total privacy cost of our Algorithm~\ref{algmain} can be given by $(\epsilon_{\textup{ours}}, \delta_{\textup{budget}})$-differential privacy for $\epsilon_{\textup{ours}}$  satisfying
\begin{align} 
\hspace{-12pt}[\Gaussian(1, \epsilon_{\textup{ours}}, \delta_{\textup{budget}})]^{-2}= \max_{\textup{neighboring datasets $D,D'$}} G(D, D') ,\nonumber  
\end{align}  
or  $(\epsilon, \delta)$-differential privacy for any $\epsilon$ and $\delta$ satisfying  $[\Gaussian(1, \epsilon, \delta)]^{-2}= \max_{\textup{neighboring datasets $D,D'$}} G(D, D') $. 
\end{proof}

\subsection{\textbf{Utility of the Gaussian Mechanism.}}~\label{sec-proof-theorem-4}

\noindent 
\begin{proof} The noisy response for one-dimensional query $Q_m$ is $\widetilde{Q}_m(D) = Q_m(D) + N(0, \sigma^2)$. Letting the probability of $\| \widetilde{Q}_m(D) - Q_m(D) \|_p \leq \alpha$ be $1-\beta$, then we have
        \begin{align}
            1-\beta &= \pr{\| \widetilde{Q}_m(D) - Q(D) \|_p \leq \alpha} \nonumber \\
                &= \pr{|N(0, \sigma^2)| \leq \alpha } \nonumber \\ 
                &= \pr{ -\alpha \leq N(0, \sigma^2) \leq \alpha } \nonumber \\ 
                &= \pr{N(0, \sigma^2) \leq \alpha } - \pr{ N(0, \sigma^2) \leq -\alpha} \nonumber \\ 
                & = \frac{1}{2} \left[ 1+ \text{erf}\left(\frac{\alpha}{\sigma \sqrt{2}}\right)\right]-\frac{1}{2} \left[ 1+ \text{erf}{\left(\frac{-\alpha}{\sigma \sqrt{2}}\right)}\right] \nonumber \\
                & = \text{erf}\left(\frac{\alpha}{\sigma \sqrt{2}}\right), \label{eq-erf}
        \end{align}
where erf$(\cdot)$ denotes the error function and the last step of Eq.~(\ref{eq-erf}) uses the fact that erf$(\cdot)$ is an odd function.

According to the two-sigma rule of Gaussian distribution~\cite{pukelsheim1994three}, which can also be obtained from above equation that  $95\%$ values lie within two standard deviations of the mean. Thus, if we set $\alpha = 2 \sigma$, $\beta \approx 0.05$.
\end{proof}

\end{document}